\newcommand{\qed}{\rule{7pt}{7pt}}
\newenvironment{proof}{\noindent {\it Proof\/}:}{$\qed$ \medskip}
\newcommand{\set}[1]{\{ #1 \}}
\newtheorem{theorem}{Theorem}[section]
\newtheorem{lemma}[theorem]{Lemma}
\newtheorem{corollary}[theorem]{Corollary}
\newtheorem{conjecture}{Conjecture}
\newtheorem{claim}[theorem]{Claim}
\def\({\left(}
\def\){\right)}
\newlength{\fixSUBT}
\newcommand{\SUBT}{\ensuremath{SU\hspace*{-\fixSUBT}BT}}
\begin{document}

\title{A Proof of the Boyd-Carr Conjecture}
\author{Frans Schalekamp
\and David P.\ Williamson\thanks{Address: School
of Operations Research and Information Engineering, Cornell
University, Ithaca, NY 14853, USA.  Email: {\tt dpw@cs.cornell.edu}.  This work was carried out while the author was on sabbatical at TU Berlin.
Supported in part by the Berlin Mathematical School, the Alexander von Humboldt Foundation, and NSF grant CCF-1115256.}
\and Anke van Zuylen\thanks{Address: Max-Planck-Institut f\"ur Informatik, Department 1: Algorithms and Complexity, Campus E1 4, Room 311c, 66123 Saarbr\"ucken, Germany. Email: {\tt anke@mpi-inf.mpg.de}.}}

\maketitle
\thispagestyle{empty}

\begin{abstract}
Determining the precise integrality gap for the subtour LP relaxation of the traveling salesman problem is a significant open question, with little progress made in thirty years in the general case of symmetric costs that obey triangle inequality.  Boyd and Carr~\cite{BoydC11} observe that we do not even know the worst-case upper bound on the ratio of the optimal 2-matching to the subtour LP; they conjecture the ratio is at most 10/9.

In this paper, we prove the Boyd-Carr conjecture.  In the case that a fractional 2-matching has no cut edge, we can further prove that an optimal 2-matching is at most 10/9 times the cost of the fractional 2-matching.
\end{abstract}

\setcounter{page}{0}
\newpage

\section{Introduction}

The traveling salesman problem (TSP) is the most famous problem in discrete optimization.  Given a set of $n$ cities and the costs $c(i,j)$ of traveling from city $i$ to city $j$  for all $i,j$, the goal of the problem is to find the least expensive tour that visits each city exactly once and returns to its starting point.  An instance of the TSP is called {\em symmetric} if $c(i,j) = c(j,i)$ for all $i,j$; it is {\em asymmetric} otherwise.  Costs obey the {\em triangle inequality} if $c(i,j) \leq c(i,k) + c(k,j)$ for all $i,j,k$.  The TSP is known to be NP-hard, even in the case that instances are symmetric and obey the triangle inequality.  From now on we consider only these instances unless otherwise stated.

Because of the NP-hardness of the traveling salesman problem, researchers have considered approximation algorithms for the problem.  The best approximation algorithm currently known is a $\frac 32$-approximation algorithm given by Christofides in 1976~\cite{Christofides76}.  Better approximation algorithms are known for special cases.  Exciting progress has been made recently in the case of the graphical TSP, in which costs $c(i,j)$ are given by shortest path distances in an unweighted graph; M\"omke and Svensson~\cite{MomkeS11} give a 1.461-approximation algorithm for this case.  However, to date, Christofides' algorithm has the best known performance guarantee for the general case.

There is a well-known, natural direction
for making progress which has also defied
improvement for nearly thirty years.  The following linear programming
relaxation of the traveling salesman problem was used by Dantzig,
Fulkerson, and Johnson~\cite{DantzigFJ54} in 1954.  For simplicity of notation, we
let $G=(V,E)$ be a complete undirected graph on $n$ vertices.  In
the LP relaxation, we have a variable $x(e)$ for all $e = (i,j)$
that denotes whether we travel directly between cities $i$ and $j$
on our tour.  Let $c(e) = c(i,j)$, and let $\delta(S)$ denote the
set of all edges with exactly one endpoint in $S \subseteq V$.
Then the relaxation is
\lps & &
& \mbox{Min} & \sum_{e \in E} c(e) x(e) \\
(\SUBT) & \mbox{subject to:} & & & \sum_{e \in \delta(i)} x(e) = 2, & \forall i \in V, \numb{degreecons} \\
& & & & \sum_{e \in \delta(S)}x(e) \geq 2, & \forall S\subset V,\, 3 \leq |S| \leq |V|-3 \numb{subtourcons}\\
& & & & 0 \leq x(e) \leq 1, & \forall e \in E. \numb{boundscons} \elps
The first
set of constraints (\ref{degreecons}) are called the {\em degree constraints}.  The second set
of constraints (\ref{subtourcons}) are sometimes called {\em
subtour elimination constraints} or sometimes just {\em subtour
constraints}, since they prevent solutions in which there is a
subtour of just the vertices in $S$. As a result, the linear program is sometimes called the {\em subtour LP}.
\iftoggle{abs}{}{
It is known that
the equality sign in the first set of constraints may be replaced by $\ge$
in case the costs obey the
triangle inequality (Goemans and Bertsimas~\cite{GoemansB90}; see also Williamson~\cite{Williamson90}).}

The LP is known to give excellent lower bounds on TSP instances in practice, coming within a percent or two of the length of the optimal tour
(see, for instance, Johnson and McGeoch~\cite{JohnsonM02}).  However, its theoretical worst-case is not well understood.  In 1980, Wolsey~\cite{Wolsey80} showed that Christofides'
algorithm produces a solution whose value is at most $\frac 32$
times the value of the subtour LP (also shown later by Shmoys and
Williamson~\cite{ShmoysW90}). This proves
that the {\em integrality gap} of the subtour LP is at most
$\frac 32$; the integrality gap is the worst-case ratio, taken
over all instances of the problem, of the value of the optimal tour to the value of the subtour LP, or the ratio of the optimal integer solution to the optimal fractional solution.   The integrality gap of the LP is
known to be at least $\frac 43$ via a specific class of instances.  However, no instance is known that has integrality gap worse than this, and it has been conjectured for some time that the integrality gap is at most $\frac 43$ (see, for instance, Goemans~\cite{Goemans95}).   The results of M\"omke and Svensson~\cite{MomkeS11} show that in the case of the graphical TSP, the integrality gap is at most 1.461; if the graph is cubic, Boyd, Sitters, van der Ster, and Stougie~\cite{BoydSSS11} show that the gap is $\frac 43$, and M\"omke and Svensson extend this bound to subcubic graphs as well.

\iftoggle{abs}{}{
There is some evidence that the conjecture might be true.
Benoit and Boyd~\cite{BenoitB08} have shown via computational
methods that the conjecture holds for $n \leq 10$, and Boyd and Elliot-Magwood~\cite{BoydE10} have extended this to $n \leq 12$.  In a 1995 paper,
Goemans~\cite{Goemans95} showed that adding any class of valid
inequalities known at the time to the subtour LP could increase
the value of the LP by at most $\frac 43$; this is necessary
for the conjecture to be true.  Somewhat weaker evidence is as follows.
A {\em 2-matching} is an integer solution to the subtour LP obeying only the degree constraints (\ref{degreecons}) and the bounds constraints (\ref{boundscons}).\footnote{We note that what we refer to here as 2-matchings, are also sometimes called 2-factors.}
A {\em fractional 2-matching} is a 2-matching without the integrality constraints.
Boyd and Carr~\cite{BoydC99} have shown that the integrality gap for the
2-matching problem is at most $\frac 43$. Furthermore,  Boyd and Carr~\cite{BoydC11} have shown that if the subtour LP solution is half-integral (that is, $x(i,j) \in \set{0,\frac{1}{2},1}$ for all $i,j \in V$) and has a particular structure then there is a tour of cost at most $\frac 43$ times the value of the subtour LP.
}

Not only do we not know the integrality gap of the subtour LP, Boyd and Carr have observed that we don't even know the worst-case ratio of the optimal 2-matching to the value of the subtour LP, which is surprising because 2-matchings are well understood and well characterized.
\iftoggle{abs}{A {\em 2-matching} is an integer solution to the subtour LP obeying only the degree constraints (\ref{degreecons}) and the bounds constraints (\ref{boundscons}).\footnote{We note that what we refer to here as 2-matchings are also sometimes called 2-factors.}
A {\em fractional 2-matching} is a 2-matching without the integrality constraints.  Boyd and Carr make the following conjecture.}{They make the following conjecture.}
\begin{conjecture}[Boyd and Carr~\cite{BoydC11}] \label{conj:bc}
The worst-case ratio of an optimal 2-matching to an optimal solution to the subtour LP is at most $\frac {10}9$.\iftoggle{abs}{\footnote{The especially astute reader may wonder why this conjecture does not follow from the work of Goemans \cite{Goemans95}.  See Boyd and Carr \cite{BoydC11} or the full version of this paper for a discussion of this issue.}}{}
\end{conjecture}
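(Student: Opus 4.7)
My plan is to reduce the Boyd--Carr conjecture to the stronger structural statement announced in the abstract: for any fractional 2-matching $y$ whose support has no cut edge, there is an integer 2-matching of cost at most $\tfrac{10}{9}\,c(y)$. With this in hand I can apply it directly to $y=x^*$, the optimal subtour LP solution, because the support of $x^*$ is already 2-edge-connected. Indeed, if $e$ were a cut edge and $S$ its smaller side, then $x^*(\delta(S))=x^*(e)\le 1$; this contradicts a subtour constraint when $3\le|S|\le n-3$, and a degree-plus-bounds argument disposes of $|S|\in\{1,2\}$ (in both cases some vertex would be forced to have an incident edge of weight $2$, violating \eqref{boundscons}).

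The real work is the structural statement. I would first replace $y$ by a half-integral fractional 2-matching $y'$ of no greater cost, obtained as an extreme point of a suitable polytope dominated by $y$. By the classical half-integrality theorem for the fractional 2-matching polytope, the support of $y'$ decomposes into cycles of 1-edges (already integer 2-matchings) and ``blossoms''---collections of 1/2-odd-cycles linked by paths of 1-edges. A subtle point is that one must choose $y'$ so that the no-cut-edge property of $y$ is inherited at the blossom level; a naive extreme point could introduce new cut edges, so a careful definition of the polytope (perhaps intersected with an explicit 2-edge-connectivity constraint) is needed.

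For rounding, I would express the contribution of each blossom to $y'$ as a convex combination of integer 2-matching patterns. For a single 1/2-odd-cycle of length $2k+1$, uniformly averaging the $2k+1$ ``skip-a-vertex'' patterns---each replacing two adjacent 1/2-edges by a triangle-inequality shortcut---recovers the $\tfrac{4}{3}$ bound of Boyd and Carr. To sharpen this to $\tfrac{10}{9}$ one must use the global 2-edge-connectivity of $y'$: each blossom is joined to the rest of the graph by two edge-disjoint paths, which allow its ``shortcut edges'' to be replaced by paths through neighboring structure that is already paid for by $y'$. The main obstacle, I expect, is exactly this global coordination step: per-blossom rounding bottoms out at $\tfrac{4}{3}$, so a genuinely nonlocal charging scheme is required to amortize the rounding cost against the 2-edge-connected structure and produce the $\tfrac{10}{9}$ ratio, while still assembling a globally feasible integer 2-matching. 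The passage from a general $y$ to a half-integral $y'$ that preserves the no-cut-edge hypothesis is a secondary but nontrivial concern.
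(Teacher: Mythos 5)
The decisive gap is in your very first reduction. The $\tfrac{10}{9}$ structural theorem for cut-edge-free fractional 2-matchings (Theorem~\ref{thm:f2m-109}) applies to solutions with the Balinski structure---half-integral, with odd $\tfrac12$-cycles joined by paths of $1$-edges---and its proof is built entirely on that structure: it constructs a cubic, 2-edge-connected auxiliary graph from the paths and cycles and invokes the Naddef--Pulleyblank matching bound. An optimal subtour LP solution $x^*$ is indeed 2-edge-connected, as you verify, but it is generally \emph{not} half-integral and admits no such path/cycle decomposition, so the theorem cannot be applied to $y=x^*$ directly. Your proposed fix---pass to a half-integral extreme point $y'$ of cost at most $c(x^*)$ that inherits the no-cut-edge property---is exactly the step that cannot be carried out as stated, and you should not relegate it to a ``secondary concern.'' The optimal fractional 2-matching for the given costs does have cost at most the subtour LP value, but it may have cut edges; and intersecting the fractional 2-matching polytope with 2-edge-connectivity constraints destroys the half-integrality and Balinski structure of its extreme points, so there is no reason a cheap, cut-edge-free, half-integral $y'$ should exist. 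This obstruction is precisely why the paper abandons the combinatorial route for the general conjecture and instead argues polyhedrally: it describes the graphical 2-matching polytope via the 2MO formulation (Theorem~\ref{2fo}) and shows directly that the map $y(i_m,j_m)=\tfrac89 x(i,j)$, $y(i_m,j_o)=y(i_o,j_m)=\tfrac19 x(i,j)$ sends any feasible subtour LP solution to a feasible fractional graphical 2-matching of cost exactly $\tfrac{10}{9}$ times the LP value (Lemma~\ref{lem:boydcarr}), which then rounds to an integral graphical 2-matching and shortcuts to a 2-matching.

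A smaller but still real problem is the rounding step itself. Even granting a cut-edge-free half-integral $y'$, your per-blossom ``skip-a-vertex'' averaging is not how the $\tfrac43$ bound is obtained (the half-cycle vertices also lie on $1$-paths, so local surgery on a cycle must be coordinated with doubling or keeping the incident paths), and your route from $\tfrac43$ to $\tfrac{10}{9}$ is only a stated hope: ``two edge-disjoint paths allow shortcut edges to be replaced by paths through neighboring structure'' is not an argument. The paper's combinatorial proof achieves $\tfrac{10}{9}$ by replacing every path by a three-pattern gadget in the cubic auxiliary graph and exhibiting the explicit fractional point ($\tfrac19$ on pattern edges, $\tfrac49$ on cycle edges) in the perfect matching polytope; you would need to supply a charging scheme of comparable precision for your nonlocal amortization to go through.
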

\noindent It is known that there are cases for which the cost of an optimal 2-matching is at least $\frac{10}9$ times the optimal solution to the subtour LP; see Figure~\ref{fig:worstcase}.  Boyd and Carr have shown that the conjecture is true if the solution to the subtour LP has a very special structure: namely, all variables $x(e) \in \set{0,\frac 12,1}$, the cycles formed by the edges $e$ with $x(e) = \frac 12$ all have the same odd size $k$, and the support is $(k-1)$-edge-connected.\footnote{In fact, they show in this case the optimal 2-matching has cost at most $\frac{3k+1}{3k}$ times the subtour LP.}
\iftoggle{abs}{In the general case, the only bound on this ratio we know is one of Boyd and Carr \cite{BoydC99}, who show that the integrality gap of 2-matchings is at most $\frac 43$}{In the general case, the only bound on this ratio that we know of is the Boyd and Carr bound on the integrality gap of 2-matchings}; since the constraints of the subtour LP are a superset of the fractional 2-matching constraints, this implies the ratio is at most~$\frac 43$.

\begin{figure}
\begin{center}
\includegraphics[height=1in]{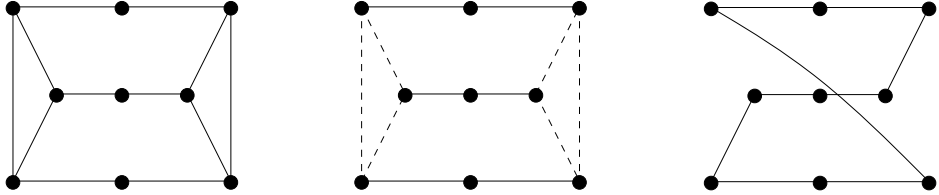}
\end{center}
\caption{Illustration of the worst example known for the ratio of 2-matchings to the subtour LP.  The figure on the
left shows the instance; all edges in the graph have cost 1, all other edges have cost 2.  The figure in the center gives the subtour LP
solution, in which the dotted edges have value $\frac 12$, and the solid
edges have value 1; this is also an optimal fractional 2-matching.  The figure on the right gives an optimal 2-matching, which is also the optimal tour.} \label{fig:worstcase}
\end{figure}

\iftoggle{abs}{}{
The work of Goemans~\cite{Goemans95} has some bearing on this conjecture.  He studies the following linear program which is essentially same as the subtour LP in the case edge costs obey triangle inequality:
\lps & &
& \mbox{Min} & \sum_{e \in E} c(e) x(e) \\
(\SUBT') &  \mbox{subject to:} & & & \sum_{e \in \delta(S)} x(e) \geq 2, & \forall S \subset V,\, S \neq \emptyset, \numb{subtourcons2} \\
& & & &  x(e) \geq 0, & \forall e \in E. \numb{boundscons2} \elps
Goemans shows (among other things) that adding comb inequalities to this LP can increase the LP value by at most $\frac{10}9$; more precisely, he shows that if $x$ is a feasible solution to $(\SUBT')$, then $\frac{10}{9} x$ is feasible for the LP obtained by adding comb inequalities to $(\SUBT')$.  It is known that adding a subset of the comb inequalities to the degree constraints (\ref{degreecons}) and bounds (\ref{boundscons}) gives the 2-matching polytope.  This would imply the Boyd-Carr conjecture if it were known that there is an optimal solution that obeys the degree constraints when the comb inequalities are added to $(\SUBT')$; as mentioned above, it can be shown that there is an optimal solution  for $(\SUBT')$ that obeys the degree constraints when the edge costs obey the triangle inequality.  But we do not know whether there is an optimal solution that obeys the degree constraints if the comb inequalities are added.\footnote{To quote Goemans \cite[p.\ 348]{Goemans95}: ``One might wonder whether the worst-case improvements remain unchanged when one adds the degree constraints $x(\delta\{i\})=2$ for all $i \in V$ and restricts one's attention to cost functions satisfying the triangle inequality.  We believe so but have been unable to prove it.  The result would follow immediately if one could prove that the degree constraints never affect the value of the relaxation when the cost function satisfies the triangle inequality.''}}

The contribution of this paper is to improve our state of knowledge for the subtour LP by proving Conjecture~\ref{conj:bc}.  

We start by showing that in some cases the cost of an optimal 2-matching is at most $\frac{10}9$ the cost of a fractional 2-matching, which is a stronger statement than Conjecture~\ref{conj:bc}; in particular, we show this is true whenever the support of the fractional 2-matching has no cut edge.  The example in Figure~\ref{fig:worstcase} shows that the ratio can be at least $\frac{10}9$ in such cases, so this result is tight.  As the first step in this proof, we give a simplification of the Boyd and Carr result bounding the integrality gap for 2-matchings by~$\frac 43$.  In the case that the support of an optimal fractional 2-matching has no cut edge, the proof becomes quite simple.  
The perfect matching polytope plays a crucial role in the proof: we use the matching edges to show us which edges to remove from the solution in addition to showing us which edges to add. We note that this idea was independently developed in the recent work of M\"omke and Svensson, but also previously appeared in the reduction of the 2-matching polytope to the matching polytope; see, for instance, Schrijver~\cite[Section 30.7]{Schrijver-book}.
We also use a notion from Boyd and Carr~\cite{BoydC99} of a {\em graphical} 2-matching: in a graphical 2-matching, each vertex has degree either 2 or 4, each edge has 0, 1, or 2 copies, and each component has size at least three.  Given the triangle inequality, we can shortcut any graphical 2-matching to a 2-matching of no greater cost.

To obtain our proof of the Boyd-Carr conjecture, we give a polyhedral formulation of the graphical 2-matching problem, and use it to prove Conjecture~\ref{conj:bc}.  If $x$ is a feasible solution for the subtour LP, then, roughly speaking, we show that $\frac{10}{9} x$ is feasible for the graphical 2-matching polytope.  Our previous results give us intuition for the precise mapping of variables that we need.  Using the graphical 2-matching polytope allows us to overcome the issues with the degree constraints faced in trying to use Goemans' results.

All the results above can be made algorithmic and have polynomial-time algorithms, though we do not explicitly determine running times.

We conclude by posing a new conjecture, namely that the worst-case integrality gap is achieved for solutions to the subtour LP that are fractional 2-matchings (that is, for instances such that adding the subtour constraints to the degree constraints and the bounds on the variables does not change the objective function value).

In a companion paper, Qian, Schalekamp, Williamson, and van Zuylen~\cite{QianSWvZ11} show that the proof of the Boyd-Carr conjecture can be used to help bound the integrality gap of the subtour LP for the 1,2-TSP.  They show that the gap is at most $\frac{106}{81} \approx 1.3086 < \frac 43$.  They also give a proof that the cost of the optimal 2-matching is at most $\frac{10}9$ times the cost of a fractional 2-matching in the case that $c(i,j) \in \{1,2\}$, which gives an alternate proof of the Boyd-Carr conjecture in this case.

Our paper is structured as follows.  We introduce basic terms and notation in Section~\ref{sec:prelims}.  In Section~\ref{sec:2m-comb}, we rederive the Boyd-Carr integrality gap for 2-matchings, and show that the gap is at most $\frac{10}9$ in the case the fractional 2-matching has no cut edge. In Section~\ref{sec:2m-poly}, we give the polytope for graphical 2-matchings and show how to use it to prove the Boyd-Carr conjecture. Finally, we close with our new conjecture in Section~\ref{sec:conc}.  
\iftoggle{abs}{Some proofs are omitted due to space restrictions.}{}

\section{Preliminaries}
\label{sec:prelims}

We will work extensively with fractional 2-matchings; that is, optimal solutions $x$ to the LP $(\SUBT)$ with only constraints (\ref{degreecons}) and (\ref{boundscons}).  For convenience we will abbreviate ``fractional 2-matching'' by F2M and ``2-matching'' by 2M.  F2Ms have the following well-known structure (attributed to Balinski~\cite{Balinski65}).  Each connected component of the support graph (that is, the edges $e$ for which $x(e) > 0$) is either a cycle on at least three vertices with $x(e)=1$ for all edges $e$ in the cycle, or consists of odd-sized cycles with $x(e) = \frac 12$ for all edges $e$ in the cycle connected by paths of edges $e$ with $x(e) = 1$ for each edge $e$ in the path (the center figure in Figure~\ref{fig:worstcase} is an example).  We call the former components {\em integer components} and the latter {\em fractional components}.  Many of our results focus on transforming an F2M into a 2M, in which all components are integer.  For that reason, we will often focus solely on how to transform the fractional components into integer components. We then call the edges of fractional components for which $x(e)=\frac 12$ {\em cycle edges} and the edges for which $x(e)=1$ {\em path edges}.
Note that removing a cycle edge can never disconnect a fractional component.  If removing a path edge disconnects a fractional component, we call it a {\em cut edge}.  The associated path of the path edge we will call a {\em cut path}, since every edge in it will be a cut edge.  We will say that a fractional 2-matching is {\em connected} if it has a single component.

We will use a concept introduced by Boyd and Carr~\cite{BoydC99} of a {\em graphical} 2-matching (G2M).  As stated above, in a graphical 2-matching, each vertex has degree either 2 or 4, each edge has 0, 1, or 2 copies, and each component has size at least three.  Given the triangle inequality, we can shortcut any G2M to a 2M of no greater cost.  Our techniques for transforming an F2M to a 2M actually find G2Ms.

We will often need to find minimum-cost perfect matchings.  By a result of Edmonds~\cite{Edmonds65}, the perfect matching polytope is defined by the following linear program $(M)$:
\lps & &
& \mbox{Min} & \sum_{e \in E} c(e) x(e) \\
(M) & \mbox{subject to:} & & & \sum_{e \in \delta(i)} x(e) = 1, & \forall i \in V,  \numb{matchingdegreecons} \\
& & & & \sum_{e \in \delta(S)}x(e) \geq 1, & \forall S\subset V,\, |S| \mbox{ odd}, \numb{matchingScons}\\
& & & & x(e) \geq 0, & \forall e \in E. \numb{matchingboundscons} \elps

\section{2-matching Integrality Gaps}
\label{sec:2m-comb}

In this section, we bound the cost of a G2M in terms of an F2M via combinatorial methods.  We start by giving a proof of a result of Boyd and Carr~\cite{BoydC99} that there is a G2M of cost at most $\frac{4}{3}$ the cost of an F2M.    Our proof is somewhat simpler than theirs, but more importantly, it introduces the main ideas that we will need to obtain other results.  We then show that if the F2M has no cut edges, we can improve the bound from $\frac{4}{3}$ to $\frac{10}{9}$.  The main idea of this section is that given an F2M, we define a matching problem and compute a perfect matching.  The perfect matching tells us how to modify the fractional components by either duplicating or removing edges so that we obtain a G2M.  We then relate the cost of the perfect matching found to the F2M by providing a feasible solution to the perfect matching LP $(M)$.  We will need the following result of Naddef and Pulleyblank~\cite{NaddefP81}
\iftoggle{abs}{.}{; we give the proof since we will use some of its ideas later on.}

\begin{lemma}[Naddef and Pulleyblank~\cite{NaddefP81}]
\label{lem:np}
Let $G$ be a cubic, 2-edge-connected graph with edge costs $c(e)$ for all $e \in E$.  Then there exists a perfect matching in $G$ of cost at most $\frac{1}{3} \sum_{e \in E} c(e)$.
\end{lemma}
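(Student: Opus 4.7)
The plan is to use Edmonds' matching polytope $(M)$ displayed just above the lemma: I will exhibit an explicit fractional feasible solution whose objective value is $\tfrac{1}{3}\sum_e c(e)$, and then conclude that the minimum-cost perfect matching (which exists and has no larger cost than any feasible LP solution) does the job. The natural candidate is the uniform vector $x(e)=\tfrac{1}{3}$ for every $e\in E$.

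The verification splits into checking the three families of constraints of $(M)$. The degree constraints (\ref{matchingdegreecons}) are immediate: at each vertex $i$, $|\delta(i)|=3$ since $G$ is cubic, so $\sum_{e\in\delta(i)} x(e)= 3\cdot \tfrac{1}{3}=1$. The nonnegativity constraints (\ref{matchingboundscons}) are trivial. The real content is the odd-set constraints (\ref{matchingScons}): for every odd $|S|$, I need $|\delta(S)|\ge 3$. The key parity observation is that in a cubic graph, summing degrees over $S$ gives $3|S|=2\cdot(\text{edges inside }S)+|\delta(S)|$, so when $|S|$ is odd, $|\delta(S)|$ is odd. Combined with 2-edge-connectivity, which gives $|\delta(S)|\ge 2$, this forces $|\delta(S)|\ge 3$, so that $\sum_{e\in\delta(S)} x(e)\ge 1$.

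Thus $x\equiv\tfrac{1}{3}$ lies in the perfect matching polytope defined by $(M)$. Since Edmonds proved that $(M)$ is an integer polytope, its LP optimum is attained by an actual perfect matching $M^*$; hence $c(M^*)\le \sum_e c(e)\cdot\tfrac{1}{3}=\tfrac{1}{3}\sum_{e\in E}c(e)$, which is the claim.

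I do not expect any real obstacle: the only subtlety is noticing the parity argument that promotes $|\delta(S)|\ge 2$ to $|\delta(S)|\ge 3$ in cubic graphs on odd-sized $S$. Everything else is a one-line calculation, and invoking Edmonds' theorem on the integrality of the matching polytope is standard.
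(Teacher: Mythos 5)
Your proof is correct and follows exactly the paper's own argument: both show that $x(e)=\frac13$ is feasible for the matching polytope $(M)$ by using cubicity for the degree constraints and the parity-plus-2-edge-connectivity argument to get $|\delta(S)|\ge 3$ for odd $S$, then invoke integrality of $(M)$. No differences worth noting.
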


\iftoggle{abs}{}{
\begin{proof}
The main idea is to show that $x(e) = \frac 13$ is a feasible solution to the matching polytope $(M)$.  The lemma then follows from the fact that $(M)$ has integer extreme points. Since $G$ is cubic, $|V|$ must be even, and $\sum_{e \in \delta(i)} x(e) = 1$.  Now consider any $S \subset V$ with $|S|$ odd.  Because $G$ is cubic, it must be that $|\delta(S)|$ is odd, and since $G$ is 2-edge-connected, $|\delta(S)| \geq 2$.  Therefore $|\delta(S)| \geq 3$, and $\sum_{e \in \delta(S)} x(e) \geq 1$.
\end{proof}
}

\begin{theorem} \label{thm:f2m-43-cutedge}
There exists a G2M of cost at most $\frac 43$ times the cost of an F2M  if the F2M has no cut edge.
\end{theorem}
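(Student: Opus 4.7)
The plan is to round the F2M into a G2M via a perfect matching in a suitable cubic auxiliary graph, and to bound the cost using Lemma~\ref{lem:np}.

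I would handle each component of the F2M separately. Integer components are already cycles with every vertex of degree $2$ and thus are valid G2M components. Fix a fractional component $F$ and build a cubic auxiliary graph $H=H(F)$ whose vertex set is the set of cycle vertices of $F$, and whose edges are (i) every cycle edge of $F$, with its original cost $c(e)$, and (ii) one ``contracted'' edge between the two cycle-vertex endpoints of each path $P$ of $F$, with cost $c(P)=\sum_{e\in P}c(e)$. Every cycle vertex has two cycle edges and one path edge in $F$, so every vertex of $H$ has degree $3$. Since $F$ has no cut edge, no path of $F$ is a bridge, and removing any single cycle edge leaves its odd cycle connected as a path of at least two edges, so $H$ is 2-edge-connected. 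Lemma~\ref{lem:np} then provides a perfect matching $M^*$ in $H$ with $c(M^*)\le \frac{1}{3}c(H)$, where $c(H)$ is the sum of the costs of the edges of $H$.

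The matching then dictates the G2M: for each cycle vertex $v$, its unique $M^*$-edge tells me how to set the multiplicities of the three edges of $H$ at $v$. If the $M^*$-edge at $v$ is a cycle edge $e$, I include $e$ at multiplicity $1$, set $v$'s other cycle edge to multiplicity $0$, and leave $v$'s incident path at multiplicity $1$. If instead the $M^*$-edge at $v$ is the contracted edge representing a path $P$, I double every edge of $P$ to multiplicity $2$ and set both of $v$'s cycle edges to multiplicity $0$. Inspecting both endpoints of each edge of $H$ confirms that the two rules agree, so multiplicities are well-defined; every cycle vertex ends at degree $2$, and every internal path vertex ends at degree $2$ or $4$ according to whether its path is in $M^*$ or not.

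For the cost, let $C=\sum_{\text{cycle }e}c(e)$ and $P_{\text{tot}}=\sum_{P}c(P)$, so that the F2M costs $C/2+P_{\text{tot}}$ on $F$ and $c(H)=C+P_{\text{tot}}$. The constructed G2M costs exactly $c(M^*)+P_{\text{tot}}$: each cycle edge contributes $c(e)$ precisely when $e\in M^*$, and each path contributes $c(P)$ by default plus an extra $c(P)$ when $P\in M^*$. Applying $c(M^*)\le \frac{1}{3}(C+P_{\text{tot}})$ gives a G2M cost of at most $\frac{1}{3}C+\frac{4}{3}P_{\text{tot}}\le \frac{2}{3}C+\frac{4}{3}P_{\text{tot}}=\frac{4}{3}(C/2+P_{\text{tot}})$, i.e., at most $\frac{4}{3}$ of the F2M cost on $F$; summing over components proves the theorem. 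The step I expect to be the main obstacle is ensuring that every resulting component has size at least three, as the G2M definition requires. The construction can produce a smaller component only when $M^*$ matches a path $P$ of length $1$: both endpoints of $P$ then have $P$ as their $M^*$-edge and form a component of just two vertices joined by a doubled copy of $P$. I would handle this either by biasing the choice of $M^*$ to avoid matching length-$1$ paths whenever possible, or, when this is unavoidable, by a local repair that uses the triangle inequality to merge the two-vertex component into a neighboring component at no additional cost.
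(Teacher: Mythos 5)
Your overall architecture matches the paper's: contract each path of a fractional component to a single edge, observe the resulting graph is cubic and 2-edge-connected because there is no cut edge, invoke Lemma~\ref{lem:np} to get a cheap perfect matching, and let the matching dictate edge multiplicities. But your rule on the cycle edges is the \emph{complement} of the one that works, and this is where the proof breaks. You keep the matched cycle edge at each vertex and delete the unmatched ones; consequently a vertex whose matching edge is its path loses \emph{both} of its cycle edges, so every doubled path becomes an isolated component containing only the path's own vertices. When such a path has length~1 you get a component of size two, which is not a G2M, as you yourself note. Neither of your proposed repairs is established: a merge of a two-vertex component into a neighboring component replaces an edge $(u,v)$ and a copy of $(a,b)$ by $(u,a)$ and $(b,v)$, and the triangle inequality gives no bound of the form $c(u,a)+c(b,v)\le c(u,v)+c(a,b)$, so the merge is not cost-free; and there is no guarantee that a perfect matching avoiding all length-1 path edges exists, let alone one still satisfying the $\frac 13$ cost bound of Lemma~\ref{lem:np}. (It is also a warning sign that your accounting yields $\frac 43 P + \frac 13 C$, strictly better than the tight example of Figure~\ref{fig:worstcase} allows in general.)

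The paper resolves exactly this difficulty by flipping your rule: cycle edges chosen by the matching are \emph{removed} and the unchosen ones are \emph{kept}, while cycle edges are given \emph{negative} cost in the auxiliary graph so that Lemma~\ref{lem:np} still pays for the removals. Then a vertex whose path edge is matched retains both of its cycle edges and has degree four, so every doubled path stays attached to the incident cycles and each component automatically has at least three vertices; the cost works out to $P+C+\frac 13(P-C)=\frac 43\bigl(P+\frac 12 C\bigr)$. If you adopt that complementary rule (and the sign change it forces in the auxiliary costs), the rest of your argument goes through without any special handling of short paths.
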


\begin{proof}
As described above, it is sufficient to focus on a single fractional component of the F2M.  Let $G$ be the support graph of this component.

To find the G2M, we find a minimum-cost perfect matching on the graph $G'$ we obtain by replacing each path in $G$ by a single edge, which we will call (at the risk of some confusion) a path edge. We set the cost of this edge to be the cost of the path in $G$, and we set the cost of a cycle edge in $G'$ to the {\it negative} of the cost of the cycle edge in $G$.  Note that $G'$ is cubic and 2-edge-connected because the support graph $G$ of the F2M has no cut edge.

Given a minimum-cost perfect matching in $G'$, we construct a G2M in $G$ by first including all paths from $G$. If a path edge is in the matching in $G'$, we double the path in $G$. If a cycle edge is {\it not} in the matching in $G'$, then we include the cycle edge in the G2M in $G$, otherwise we omit the cycle edge.

We first show that this indeed defines a G2M: for each vertex, the degree is four if the perfect matching contains the path edge incident on the vertex (since in that case, the two cycle edges on the vertex cannot be in the perfect matching, and hence both are added to the G2M together with two copies of the path), and it is two otherwise (since one cycle edge is in the perfect matching and hence only the other cycle edge and one copy of the path are added to the graphical 2-matching). Note that any connected component indeed has at least three nodes, since for any doubled path, we also take the four cycle edges incident on the endpoints.

We let $C$ denote the sum of the costs of the cycle edges, and $P$ the cost of the paths. Note that the cost of the F2M solution is $\frac 12 C + P$. The cost of the G2M is equal to the cost of all edges in the support graph ($P+C$) plus the cost of the perfect matching.  Because $G'$ is cubic and 2-edge-connected, we can invoke Lemma~\ref{lem:np} to show that the perfect matching has cost at most a third the cost of the edges in $G'$, or at most $\frac 13 P -\frac 13 C$.  Hence the cost of the G2M is at most $$P+C+ \frac 13 P - \frac 13 C = \frac 43 P + \frac 23 C = \frac 43 \left(P + \frac 12 C\right),$$ or at most $\frac 43$ the cost of the F2M solution, as claimed.
\end{proof}

\iftoggle{abs}{}{
The idea of using edges from a perfect matching to decide which edges to include in a matching and which edges to remove has also been used recently by M\"omke and Svensson~\cite{MomkeS11}.}

We now modify the proof of the theorem above so that the result extends to the case in which the F2M has cut edges.

\begin{theorem}[Boyd and Carr~\cite{BoydC99}]
\label{thm:bc43}
There exists a G2M of cost at most $\frac 43$ times the cost of an F2M.
\end{theorem}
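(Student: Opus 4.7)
The plan is to modify the proof of Theorem~\ref{thm:f2m-43-cutedge}, using the same construction: contract each maximal path of the F2M to an edge of the cubic graph $G'$, give cycle edges cost $-c(e)$ and path edges cost $+c(e)$ in $G'$, find a minimum-cost perfect matching in $G'$, and read off the G2M by doubling paths whose corresponding edge is in the matching and dropping cycle edges whose corresponding edge is in the matching.

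The new obstacle is that $G'$ is no longer 2-edge-connected when the F2M has cut paths, so Lemma~\ref{lem:np} does not apply directly. Two observations guide the modification: first, each bridge of $G'$ must lie in every perfect matching, since for a bridge $(u,v)$ the odd side $S$ has $|\delta(S)|=1$, forcing $y((u,v)) = 1$ in any feasible solution to $(M)$; second, no two bridges of $G'$ share a vertex, because each vertex of the F2M has exactly one incident path edge. Together these imply that each cut path is necessarily doubled in the G2M, and that the two cycle edges at each bridge endpoint have $y$-value $0$ in any LP-feasible matching by the degree constraint at that endpoint.

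The central step is to exhibit a feasible fractional solution $y$ to $(M)$ on $G'$ whose cost is at most $\tfrac{1}{3}(P-C)$, where $P$ and $C$ denote the total path-edge and cycle-edge costs in $G$. I would set $y(e)=1$ on every bridge and $y(e)=0$ on every non-bridge edge incident to a bridge endpoint, and then specify $y$ on the remaining ``free'' edges. The uniform choice $y=\tfrac{1}{3}$ on free edges is infeasible at non-bridge-endpoint vertices adjacent to bridge endpoints; I would compensate by raising $y$ on neighboring free cycle edges (in particular, within the odd cycle containing each bridge endpoint), exploiting the negative cost of cycle edges in $G'$ to absorb the extra bridge contribution while restoring all degree constraints. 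By the integrality of the matching polytope, the minimum-cost perfect matching of $G'$ then has cost at most that of $y$; combining with the support cost $P+C$ gives a G2M of cost at most $P+C+\tfrac{1}{3}(P-C) = \tfrac{4}{3}(P+C/2)$, which is $\tfrac{4}{3}$ times the F2M cost.

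The main obstacle will be the precise definition of $y$ near bridges and the verification of the odd-set constraints of $(M)$, in particular for odd sets $S$ whose $\delta(S)$ avoids all bridges but contains several edges incident to bridge endpoints. I expect the argument to rely on structural properties of F2M supports --- notably that the path partner of a non-bridge-endpoint vertex is itself not a bridge endpoint, which limits to two the number of bridge-endpoint neighbors a free vertex can have --- together with a charging scheme that redistributes $y$-mass locally so that the odd-set constraints and the desired cost bound are simultaneously met.
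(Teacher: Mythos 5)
There is a genuine gap, and it is fatal to the approach rather than a fixable technicality. Your own observations already reveal the problem: in a cubic graph both sides of a bridge are odd (since $3|S| = 2m_S + |\delta(S)|$ forces $|S|$ odd when $|\delta(S)|=1$), so every bridge of $G'$ carries $y$-value $1$ in \emph{every} feasible solution to $(M)$, and hence every cut path is forcibly doubled in the resulting G2M. But doubling a cut path can already cost twice that path's contribution to the F2M, and the negative cycle edges need not provide any compensation. Concretely, take a single fractional component consisting of two triangles whose cycle edges all have cost $0$, joined by a single path edge of cost $M$. The F2M costs $M$, while your construction is forced to double the bridge and output a G2M of cost $2M$ --- a ratio of $2$, not $\frac 43$. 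Equivalently, your central step is impossible: you seek a feasible $y$ for $(M)$ of cost at most $\frac 13 (P - C) = \frac 13 M$ in this example, but the bridge alone forces cost at least $M$, and the cycle edges (cost $0$ in $G'$) cannot absorb anything. No local redistribution of $y$-mass near bridges can repair this, because the obstruction is in the cost of the forced integral part of $y$, not in the feasibility of the fractional part.

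The idea you are missing is that the reduction must give the matching a genuine \emph{choice} of how to treat a cut path, including options that do not double all of it. The paper replaces each cut path and its two endpoints by a gadget containing three parallel ``pattern edges,'' one for each of three ways of deleting every third edge of the path and doubling the rest (so the path breaks into small 2-edge-connected groups that attach to the adjacent cycles). The three pattern costs sum to four times the path cost, and --- crucially --- the three pattern edges all cross the former bridge cut, so $G'$ becomes cubic \emph{and} 2-edge-connected and Lemma~\ref{lem:np} applies directly, yielding a matching of cost at most $\frac 13 P_1 + \frac 43 P_2 - \frac 13 C$ and hence the $\frac 43$ bound. In the two-triangle example above, the gadget lets the matching select the pattern that deletes the expensive edge entirely, producing two disjoint triangles of cost $0$. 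Your instinct to stay close to Theorem~\ref{thm:f2m-43-cutedge} is natural, but retaining the cut path as a single edge of $G'$ bakes the expensive decision into the polytope before the optimization happens.
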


\begin{proof}
As described above, it is sufficient to focus on a single fractional component of the F2M, and we let $G$ be the support graph of this component.

We once again create a new graph $G'$ from $G$, so that we can later define a matching problem in $G'$.  The matching will again show us how to create a G2M in $G$.  We extend the previous construction to deal with the case when the support graph has cut paths.
We introduce a gadget in $G'$ for each cut path in $G$, which replaces the cut path and its two endpoints. The other paths in $G$ are again replaced by single edges in $G'$ of cost equal to the cost of the path.  Each cycle edge in $G$ is also in $G'$ with cost equal to the negative of its cost in $G$.

To introduce the cut-path gadget, we begin by using an idea of Boyd and Carr~\cite{BoydC99}; namely, that we only need to consider three {\em patterns} to get an almost feasible graphical 2-matching on the cut path, when we allow ourselves to increase the cost by a third compared to the F2M.  Suppose the cut path has $\ell$ edges and $\ell+1$ nodes, and let $k=\lfloor \ell / 3 \rfloor$.
We can remove every third edge, double the remaining edges to obtain groups of nodes that are 2-edge-connected, where we get $k$ groups of three nodes that are G2M components, plus one group of $\ell-3k\in \{0,1,2\}$ nodes.
Alternatively, we could remove every third edge, starting from the first edge and double the remaining edges, in which case the first group has one node, the next $k$ or $k-1$ groups have three nodes and the last group again has one or two nodes.
The final pattern removes every third edge, starting from the second edge, so that the first group has two nodes, the next $k$ or $k-1$ groups have three nodes, and, again, the last group has one or two nodes.
Figure~\ref{fig:patterns} illustrates the three patterns for $\ell = 9$.

\iftoggle{abs}{
\begin{figure}[t]
\begin{minipage}[b]{0.5\linewidth}
\begin{center}
\includegraphics[width=.9\textwidth]{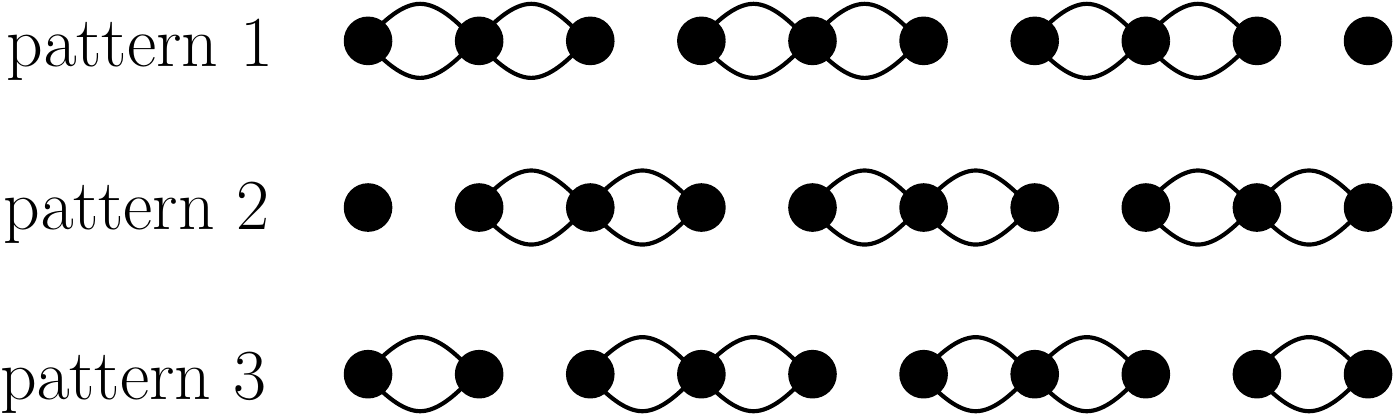}
\end{center}
\caption{Illustrations of patterns for $\ell=9$.}
\label{fig:patterns}
\end{minipage}
\hspace{0.5cm}
\begin{minipage}[b]{0.5\linewidth}
\begin{center}
\includegraphics[width=.9\textwidth]{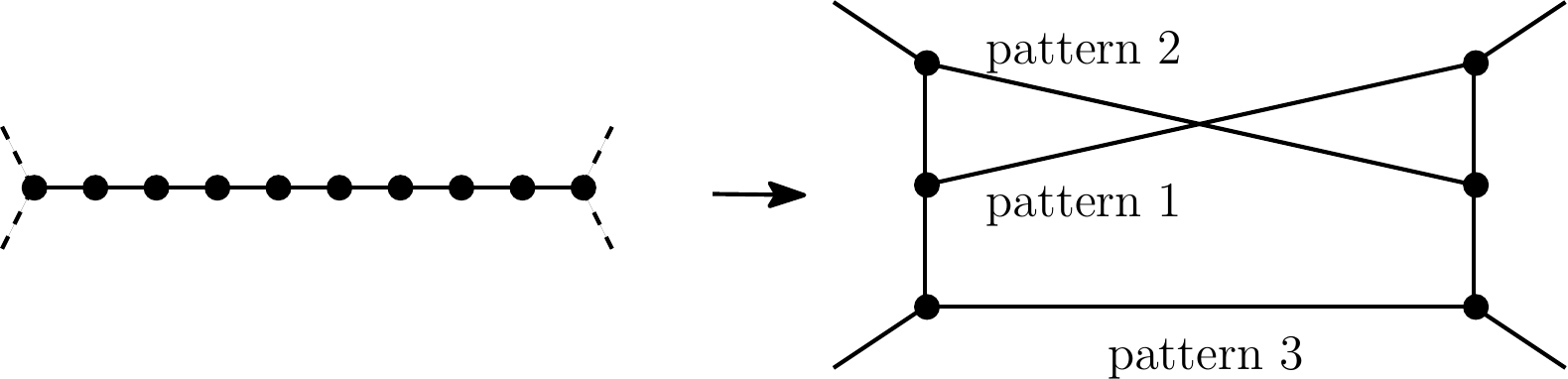}
\end{center}
\caption{Pattern gadget for $\ell=9$.}
\label{fig:pattern-gadget}
\end{minipage}
\end{figure}
}{
\begin{figure}
\begin{center}
\includegraphics[width=.5\textwidth]{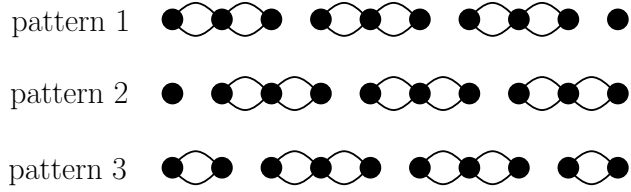}
\end{center}
\caption{Illustrations of patterns for $\ell=9$.}
\label{fig:patterns}
\end{figure}
}

To get a G2M that contains a certain pattern, we will ensure that if a group has size less than three, the G2M will include the two cycle edges incident on the first node (if the group is at the start of the pattern) or last node (if the group is at the end of the pattern).

We remark that there is exactly one pattern that starts with a group of size one, two and three, and hence two patterns need the G2M to include two cycle edges incident on the first node of the cut path. On the other hand,
there is also exactly one pattern that ends with a group of size one, two and three (the length of the cut path determines which of the three patterns ends with a group of size three: it is the second pattern if $\ell \pmod 3 = 0$, the third pattern if $\ell \pmod 3 = 1$ and the first pattern if $\ell \pmod 3 = 2$), and hence there are also two patterns that need the G2M to include the two cycle edges incident on the last node of the cut path.

We are now ready to define the cut-path gadget.
We replace each endpoint of the cut path in $G$ by a path of length two in $G'$; each of these new edges will have cost 0. Each node on the path will be connected to a {\it pattern edge} corresponding to one of the three patterns.
The middle node is connected to the pattern edge corresponding to the pattern which does not need two cycle edges incident on the endpoint of the cut path (i.e.\ the pattern for which the group containing the endpoint has size three).
We set the cost of a pattern edge to the cost of the edges in the corresponding pattern.  See Figure~\ref{fig:pattern-gadget} for an illustration of the gadget when $\ell = 9$.

\iftoggle{abs}{}{
\begin{figure}
\begin{center}
\includegraphics[width=.4\textwidth]{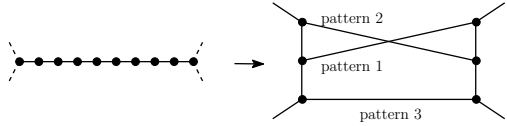}
\end{center}
\caption{Pattern gadget for $\ell=9$.}
\label{fig:pattern-gadget}
\end{figure}
}

If we replace each cut path in $G$ by a cut-path gadget in $G'$, once again $G'$ will be a cubic graph.  It is not hard to check that $G'$ is also 2-edge-connected because we have replaced the cut path in $G$ with three pattern edges crossing the cut in $G'$.

We argue that there is a minimum-cost perfect matching that uses exactly one edge from each cut-path gadget.
Note that the fact that we replace only the cut paths in $G$ by a cut gadget in $G'$ means that a perfect matching in $G'$ contains an odd number of pattern edges in a gadget. If it contains three pattern edges, then we could find a matching of no greater cost by choosing only one pattern edge, namely the pattern edge that is not incident on the middle node for the either one of its endpoints. Note that we can add two edges of cost 0 that connect the four nodes incident on the other two pattern edges, to again have a perfect matching without increasing the cost.

Now we show how to obtain a G2M in $G$ from the minimum-cost perfect matching in $G'$.  In the G2M we include
all edges from $G$ that are in paths which are not cut paths, the cycle edges in $G$ which are {\it not} chosen by the perfect matching,
duplicates of edges in paths in $G$ that are chosen by the perfect matching,
and the edges in a pattern if the corresponding pattern edge is in the perfect matching.

We argue that this set of edges is a G2M in $G$.  Note that if the perfect matching contains only the pattern edge incident on the middle node, then the two cycle edges that are adjacent to the gadget are also in the matching. Hence the corresponding endpoint in $G$ of the cut path has no cycle edges incident on it in the G2M, but since the pattern edge is incident on the middle node, the corresponding pattern ensures that the node has degree two and is in a connected component of size three.
If the perfect matching contains the pattern edge incident on a node other than the middle node, then neither of the two cycle edges that are adjacent to the gadget in $G'$ are in the perfect matching. Hence the corresponding endpoint of the cut path in $G$ has both of these cycle edges incident on it in the G2M, and zero or two edges from the pattern corresponding to the chosen pattern edge.  Hence the node has degree two or four and it is in a connected component of size at least three.

As before, because $G'$ is cubic and 2-edge-connected, we can apply Lemma~\ref{lem:np} to bound the cost of the perfect matching in $G'$.
Let $P_1$ be the cost of the paths in $G$ that are not cut paths, and $P_2$ the cost of the cut paths in $G$, so that the cost of the F2M is $P_1 + P_2 + \frac 12 C$. Note that the cost of the three pattern edges in the gadget corresponding to a cut path sums up to four times the cost of the cut path.  Thus the total cost of the edges in $G'$ is $P_1 + 4P_2 - C$.  By Lemma~\ref{lem:np}, the cost of the perfect matching in $G'$ is at most $\frac 13 P_1 + \frac 43 P_2 - \frac 13 C$.
The cost of the G2M corresponding to the minimum-cost perfect matching is therefore at most $$P_1 + \frac13 P_1 + \frac 43 P_2 + C-\frac13 C = \frac 43 P + \frac 23 C = \frac 43 \left(P + \frac 12 C \right)$$ as claimed.
\end{proof}

\iftoggle{abs}
{We can extend the ideas above to obtain a better G2M if no cut paths exist.  The basic idea is that we replace every path by a cut-path gadget, and show that the solution $x(e)=\frac 1 9$ if $e$ is a pattern edge and $x(e) = \frac 4 9$ if $e$ is a cycle edge is feasible for the matching polytope $(M)$.}
{We now show how to use the ideas behind the cut-path gadget to obtain a better G2M if no cut paths exist.}

\begin{theorem}
\label{thm:f2m-109}
If an F2M has no cut edge, then there exists a G2M of cost at most $\frac {10}{9}$ times the cost of the F2M.
\end{theorem}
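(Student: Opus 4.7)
The plan is to extend the gadget construction from Theorem~\ref{thm:bc43} by inserting a cut-path gadget in place of \emph{every} path (not just cut paths), and then to replace the uniform $\frac{1}{3}$-matching from Lemma~\ref{lem:np} by a cleverly weighted fractional matching of $G'$ that exploits the gadget structure. Because the F2M has no cut edges---and therefore no cut paths---the resulting graph $G'$ is still cubic and 2-edge-connected, so the matching machinery of Section~\ref{sec:2m-comb} still applies.

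Fix a single fractional component with support graph $G$, and construct $G'$ exactly as in Theorem~\ref{thm:bc43}, inserting a cut-path gadget for every path. Define the fractional assignment $x$ on $E(G')$ by $x(e) = \frac{1}{9}$ on pattern edges and $x(e) = \frac{4}{9}$ on every other edge (both cycle edges and the cost-$0$ internal gadget edges). The matching degree equation $\sum_{e \in \delta(i)} x(e) = 1$ holds at every vertex, since each vertex of $G'$ is incident to exactly one pattern edge and two non-pattern edges.

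The main obstacle will be showing that $x$ lies in the matching polytope $(M)$, i.e.\ $\sum_{e \in \delta(S)} x(e) \geq 1$ for every $S$ with $|S|$ odd. Writing $a$, $b$, and $c$ for the numbers of cycle, pattern, and cost-$0$ edges in $\delta(S)$, this amounts to $4(a+c) + b \geq 9$. Cubicness and 2-edge-connectivity force $|\delta(S)|$ to be odd and at least $3$, so cuts with $a+c \geq 3$ are immediate. The serious cases are $a+c \leq 1$ (few non-pattern edges in the cut), and I expect to handle them by a structural argument that leverages the gadget geometry together with the absence of cut edges: any collection of pattern edges appearing in a small cut drags along the cycle edges attached to the corresponding outer gadget nodes, and the no-cut-edge hypothesis guarantees that separating one side of any gadget from the rest of $G'$ always exposes enough such cycle edges to make the inequality hold. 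The natural tight cuts---for example, $S$ equal to the union of all same-side gadget nodes along one entire odd cycle of the fractional component---achieve the inequality with equality, which is exactly what we need.

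Once feasibility of $x$ is established, the minimum perfect-matching cost in $G'$ is at most $\sum_e c(e)\,x(e) = \frac{1}{9}\cdot 4P + \frac{4}{9}\cdot(-C) = \frac{4P}{9} - \frac{4C}{9}$, where $P$ and $C$ denote the total costs of the path edges and cycle edges of the component. Constructing a G2M from this matching exactly as in Theorem~\ref{thm:bc43} gives a G2M whose cost equals (matching cost in $G'$)$\,+\,C$, so the G2M has cost at most $\frac{4P}{9} + \frac{5C}{9} \leq \frac{10P}{9} + \frac{5C}{9} = \frac{10}{9}\bigl(P + \frac{C}{2}\bigr)$, which is $\frac{10}{9}$ times the cost of the F2M, as required.
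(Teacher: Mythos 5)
Your construction is the same as the paper's in its two key ingredients (a cut-path gadget for \emph{every} path, and the fractional matching with value $\frac{1}{9}$ on pattern edges and $\frac{4}{9}$ elsewhere), and your sketch of why that fractional point satisfies the odd-cut constraints is the right one: every cycle of the F2M becomes a cycle of non-pattern edges in $G'$, so a cut either meets two such edges (and then cubicness plus parity supplies a third edge of value at least $\frac19$) or must cross at least three whole gadgets. That part is fixable as written. The gap is in the cost accounting at the end, and it is fatal. You keep the pattern-edge costs of Theorem~\ref{thm:bc43} (the three pattern edges of a gadget sum to $4$ times the path cost) and then claim the G2M costs ``matching cost $+\, C$''. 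That identity only holds if the minimum-cost perfect matching selects exactly one pattern edge in every gadget, so that every path is covered by a pattern. But the bound $\frac{4}{9}P-\frac{4}{9}C$ comes from the \emph{unconstrained} minimum perfect matching, and your fractional solution places total weight only $3\cdot\frac19=\frac13$ on the three pattern edges of a gadget; the integral matchings realizing that cost will leave many gadgets with zero pattern edges, and for those the path must still appear in the G2M at a cost your formula never charges. The resulting claim $\frac{4}{9}P+\frac{5}{9}C$ is in fact impossibly strong: on the instance of Figure~\ref{fig:worstcase} ($P=9$, $C=6$) it promises a G2M of cost $\frac{22}{3}<12$, while any G2M on $12$ vertices with all edge costs at least $1$ costs at least $12$.

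The repair is exactly the adjustment the paper makes. Subtract the cost of the original path from each of its three pattern edges, so that the three pattern edges of a gadget sum to the path cost and the fractional matching costs $\frac19 P-\frac49 C$; then account for the G2M as the baseline $P+C$ (every path once, every cycle edge once) plus the matching cost, which now measures only the \emph{marginal} cost of replacing a path by a pattern or deleting a cycle edge. This yields $P+C+\frac19P-\frac49C=\frac{10}{9}\left(P+\frac12 C\right)$ and correctly handles gadgets in which the matching picks no pattern edge (the path is kept as is, and one of the two incident cycle edges is matched, so the endpoint still gets degree two). One also needs the short argument that a minimum-cost matching may be assumed to use at most one pattern edge per gadget, which uses the fact that under the adjusted costs any two pattern edges of a gadget have nonnegative total cost.
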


\iftoggle{abs}{}{
\begin{proof}
Once again we define a new graph $G'$ from the support graph $G$ of a fractional component of the optimal F2M.  Each cycle edge in $G$ is in $G'$ with cost that is the negative of its cost in $G$.  Each path in $G$ and its two endpoints are replaced by the cut-path gadget used in the proof of Theorem~\ref{thm:bc43}.  The costs of the pattern edges in $G'$ are slightly different than in the previous proof: we subtract the cost of the original path from the cost of each pattern edge in its gadget.
In other words, the cost of a pattern edge in $G'$ is obtained by adding once the cost of the edges that appear twice in the pattern and subtracting the cost of the edges that do not appear in the pattern.
Note that the sum of the costs of the three pattern edges in $G'$ is equal to the cost of the original path in $G$.
Also, note that the sum of the costs of any two pattern edges in $G'$ is nonnegative: an edge on the path contributes its cost either positively to one pattern and negatively to the other, or positively to both patterns.

We first argue that there is a minimum-cost perfect matching that chooses either zero or one pattern edge in each cut-path gadget. Suppose the perfect matching contains two pattern edges in a gadget.  Note that on both sides of the gadget these pattern edges must be incident on the middle node, otherwise some middle node is not matched. Hence the four endpoints of the two pattern edges are connected in $G'$ by two edges of cost zero. By the observation above, the cost of the two pattern edges is nonnegative, and so we can remove the two pattern edges from the matching and add the two edges of cost zero without increasing the cost of the matching.
By the same argument, we can handle the case that the perfect matching contains three pattern edges from a gadget by choosing the pattern edge that is not incident on the middle node on both sides of the gadget, and replacing the other two pattern edges in the matching by the cost zero edges that connect their endpoints.

Therefore, we can assume the perfect matching chooses either zero or one pattern edge in a gadget. If it chooses zero pattern edges, then we add the path from $G$ to the G2M. Otherwise, the pattern corresponding to the chosen pattern edge is added to the G2M.
We also add the cycle edges to the G2M corresponding to the cycle edges that are {\it not} in the perfect matching.

By almost the same arguments as before, the solution constructed is indeed a G2M.  The only case not covered by previous arguments is the case in which zero pattern edges are chosen in $G'$.  Then it must be the case that one of the two cycle edges is chosen in $G'$ and the other is not, so that one of the two cycle edges is included in the G2M and the other not.  Since we include the path from $G$ in the G2M if no pattern edges are chosen, the endpoint of the path will have degree two.

To argue about the cost of the minimum-cost perfect matching in $G'$, we create a feasible solution for the matching linear program $(M)$.  To do this, for each pattern edge $e$, we set $x(e) = \frac 19$, and for every other edge $e'$, we set $x(e')=\frac 49$.  We will show this is a feasible solution in a moment.  Let $P$ be the cost of the path edges in the F2M, and $C$ the cost of the cycle edges, so that the F2M has cost $P + \frac 12 C$.  Since the sum of the cost of the pattern edges in a gadget is equal to the cost of the path, the cost of this solution for $(M)$ is $\frac 19 P -\frac 49 C$, and there exists a perfect matching of cost at most this much.  Thus the cost of the G2M is at most $$P + \frac 19 P + C -\frac 49 C = \frac {10}9P+\frac 59 C = \frac {10}{9} \left(P + \frac 12 C \right),$$ as claimed.

To see that $x$ is a feasible solution for $(M)$, consider any cut such that the number of nodes on each side of the cut is odd. If there exists a cycle from the F2M such that not all nodes in the gadgets for the nodes in the cycle are on the same side of the cut, then there are two edges crossing the cut with value $\frac 49$.  Since $G'$ is cubic, if the cut has odd size, then the total number of edges crossing the cut is odd, and there must be at least one more edge in the cut with value at least $\frac 19$.  Hence the total value on the edges crossing the cut is at least one.  For any other cut, since there is no cut path in $G$, there are at least three gadgets crossing the cut in $G'$. Since each gadget contains three pattern edges, the value of the edges crossing the cut is again at least one.
\end{proof}
}

\section{A Polyhedral Proof of the Boyd-Carr Conjecture}
\label{sec:2m-poly}

We will generalize the result in Theorem~\ref{thm:f2m-109} and show that the ratio between the cost of the optimal 2-matching and the subtour LP is at most $\frac{10}9$. In the combinatorial proofs of the previous section, we heavily used the fact that F2Ms have a nice simple structure, and, unfortunately, this does not hold for the subtour LP solution. We therefore turn to a polyhedral rather than a combinatorial proof. We derive a polyhedral description for graphical 2-matchings, and we then use this description to construct a feasible (fractional) G2M solution from any solution to the subtour LP of cost not more than $\frac{10}{9}$ times the value of the subtour LP. The manner in which the feasible G2M solution is defined based on a solution to $(\SUBT)$ is a generalization of the proof of Theorem~\ref{thm:f2m-109}.

\newcommand{\man}{\ensuremath{\mathrm{man}}}
\newcommand{\opt}{\ensuremath{\mathrm{opt}}}
We start by giving a polyhedral description of a generalization of 2-matching, where the node set consists of ``mandatory nodes'' ($V_\man$) and ``optional nodes'' ($V_\opt$). The former need to have degree $2$ in the solution, whereas the latter can have degree $0$ or $2$. We will refer to this problem as the 2-Matching with Optional Nodes Problem (2MO).
\begin{theorem} \label{2fo}
Let $G = ( V_\man \cup V_\opt, E )$ be a 2MO instance. The convex hull of integer 2MO solutions is given by the following polytope:
\lps
& & & & \sum_{ e \in \delta( i ) } y(e) = 2, & \forall i \in V_\man, \numb{2fodegreeconsman}\\
& & & & \sum_{ e \in \delta( i ) } y(e) \leq 2, & \forall i \in V_\opt, \numb{2fodegreeconsopt} \\
& & & & \sum_{ e \in \delta( S ) \setminus F } y(e) + \sum_{ e\in F } ( 1 - y(e) ) \geq 1, &\forall S \subseteq V,\, F \subseteq \delta( S ),\, F
\text{ matching},\, |F| \text{ odd,} \numb{2foFcons} \\
& & & & 0 \leq y(e) \leq 1, &\forall e\in E. \numb{2foboundscons} \elps
\end{theorem}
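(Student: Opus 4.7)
The plan is to prove Theorem~\ref{2fo} by reducing 2MO to the standard 2-matching problem on an auxiliary graph $G^*$, and then invoke Edmonds' theorem characterizing the 2-matching polytope. For each optional node $v \in V_\opt$ I would construct a gadget consisting of three new mandatory nodes $v_1, v_2, v_3$ together with the triangle edges $(v_1,v_2), (v_2,v_3), (v_1,v_3)$ and two zero-cost pendant edges $(v, v_1), (v, v_3)$; every node of $V_\man$ remains mandatory in $G^*$, carrying its original incident edges. In any feasible 2-matching $z$ of $G^*$, the degree constraint at $v_2$ forces $z(v_1,v_2) = z(v_2,v_3) = 1$, and then the degree constraint at $v$ determines $z(v,v_1) = z(v,v_3) = 1 - \tfrac{1}{2} y(\delta(v))$ and $z(v_1,v_3) = \tfrac{1}{2} y(\delta(v))$, where $y$ denotes the restriction of $z$ to $E$. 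In particular, any integer 2-matching of $G^*$ yields an optional node $v$ of original-edge degree either $0$ (when both pendant edges are in the matching) or $2$ (when neither is), establishing a cost-preserving bijection between integer 2-matchings of $G^*$ and integer 2MO solutions of $G$.

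First I would verify that every integer 2MO solution $y$ satisfies the claimed constraints. Constraints~(\ref{2fodegreeconsman}), (\ref{2fodegreeconsopt}), and (\ref{2foboundscons}) are immediate. For~(\ref{2foFcons}), fix $S \subseteq V$ and an odd matching $F \subseteq \delta(S)$. Since every node has even $y$-degree (either $0$ or $2$), the quantity $\sum_{e\in\delta(S)} y(e)$ is even. If $y(e) = 1$ for every $e \in F$, then $\sum_{e \in \delta(S)\setminus F} y(e)$ has opposite parity to $|F|$, hence is at least $1$; otherwise $\sum_{e \in F}(1 - y(e)) \geq 1$. In both cases~(\ref{2foFcons}) holds.

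Next I would prove integrality of extreme points by showing that the linear map $\pi$ restricting $z \in \mathbb{R}^{E(G^*)}$ to its original-edge coordinates is an affine bijection between the 2-matching polytope $Q$ of $G^*$ (which is integral by Edmonds' theorem) and the polytope $P$ defined by (\ref{2fodegreeconsman})--(\ref{2foboundscons}). The forced-value formulas above express every gadget coordinate as an affine function of $\pi(z)$, so $\pi$ is injective on $Q$ and its inverse is affine. Integrality of $P$ then follows from integrality of $Q$.

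The main obstacle is verifying the surjectivity $\pi(Q) = P$; equivalently, that the lift $z$ constructed from any $y \in P$ satisfies every blossom inequality of $Q$. After substituting the forced gadget values, each blossom inequality on a pair $(S^*, F^*)$ of $G^*$ must reduce either to a constraint that is trivially implied by the bounds and degree constraints, or to an F-inequality~(\ref{2foFcons}) on $G$ for some $S \subseteq V$ and odd matching $F \subseteq \delta(S)$. I expect this reduction to proceed by a finite case analysis over the intersection $S^* \cap \{v_1,v_2,v_3\}$ in each gadget (four cases up to symmetry) and which gadget edges lie in $F^*$, the key bookkeeping being that after removing forced gadget edges from $F^*$ and collapsing gadget nodes back to $v$ in $S^*$, the parity of $|F^*|$ transfers to parity of the induced $|F|$ in $G$, producing exactly the F-inequality required.
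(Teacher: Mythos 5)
Your reduction is genuinely different from the paper's: the paper reduces to the \emph{perfect matching} polytope via the standard edge-subdivision construction (two new nodes $p_{e,i}, p_{e,j}$ per edge plus an extra edge between the two copies of each optional node) and verifies Edmonds' odd-cut inequalities for the lifted point, whereas you attach a triangle gadget at each optional node and invoke the blossom description of the integral 2-matching polytope on the auxiliary graph $G^*$. The parts you do carry out are correct: the parity argument for the easy direction is fine, the forced values $z(v,v_1)=z(v,v_3)=1-\frac{1}{2}y(\delta(v))$ and $z(v_1,v_3)=\frac{1}{2}y(\delta(v))$ do follow from the degree constraints at $v, v_1, v_2, v_3$, and they make the restriction map $\pi$ injective with an affine inverse, so integrality of $P$ would indeed follow once $\pi(Q)=P$ is established. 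One incidental advantage of your route is that, since the original nodes and edges are preserved in $G^*$, the set $F^*\cap E$ is automatically a matching in $G$; you would therefore not need the paper's auxiliary induction showing that constraints~(\ref{2foFcons}) remain valid when $F$ is an arbitrary odd subset of $\delta(S)$ rather than a matching.

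However, the proposal has a genuine gap exactly where you flag ``the main obstacle'': you never verify that the lift of an arbitrary $y \in P$ satisfies the blossom inequalities of $G^*$; you only state that you \emph{expect} a case analysis to produce the required reduction. That verification is essentially the entire content of the theorem --- it is the analogue of the four-case odd-cut analysis that occupies most of the paper's proof --- and it is not routine. In particular, when a cut $S^*$ separates the gadget nodes $v_1, v_2, v_3$ from one another or from $v$, and when $F^*$ contains gadget edges, the parity bookkeeping you allude to can fail in the direction you need: if the total number of gadget edges in $F^*$ (summed over all cut gadgets) is odd, then $F = F^* \cap E$ has even cardinality and inequality~(\ref{2foFcons}) cannot be invoked at all, so the gadget contributions alone must supply the entire right-hand side of $1$. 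Some of these configurations are tight with no slack (for instance $S^* = \{v_1\}$ with $F^* = \{(v_1,v_2)\}$, where the three gadget terms sum to exactly $1$), so the claim cannot be dismissed as trivially implied by the bounds and degree constraints. Until every such configuration, including interactions among several simultaneously cut gadgets and the original edges, is checked, the surjectivity $\pi(Q)=P$ is unproven and the argument is incomplete.
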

\iftoggle{abs}{The proof of Theorem~\ref{2fo} is similar to the proof of the polyhedral description of the 2-matching polytope (Theorem 30.8) in Schrijver~\cite{Schrijver-book}.}{The proof of Theorem~\ref{2fo} is similar to the proof of the polyhedral description of the 2-matching polytope (Theorem 30.8) in Schrijver~\cite{Schrijver-book}, and is deferred to Appendix~\ref{sec:2foproof}.}

Recall the definition of a graphical 2-matching (G2M): (i) each vertex has degree either 2 or 4, (ii) each edge has 0, 1, or 2 copies, and (iii) each component has size at least three. We will (for the moment) relax the second condition so that each edge has at most 3 copies.

\begin{lemma}\label{lem:red}
We can reduce a G2M instance $G = ( V, E )$ to a 2MO instance $G' = ( V', E' )$ as follows:  Let $V'_\man = \{i_m: i \in V \}, V'_\opt = \{ i_o: i \in V \}, V' = V'_\man \cup V'_\opt, E' = \{ (i_m,j_m) : (i, j) \in E \} \cup \{ (i_m,j_o) : (i,j) \in E \}$. We add an edge $\{ i, j \}$ to the (relaxed) G2M solution for each edge $(i_m,j_m)$, $(i_o,j_m)$ and $(i_m,j_o)$ that is in the associated 2MO solution.
\end{lemma}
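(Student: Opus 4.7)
The plan is to set the cost of each of the three $E'$-edges $(i_m,j_m)$, $(i_m,j_o)$, $(j_m,i_o)$ corresponding to an original edge $(i,j)\in E$ equal to $c(i,j)$, and then verify that the stated mapping is a cost-preserving correspondence between feasible 2MO solutions on $G'$ and feasible (relaxed) G2M solutions on $G$. This will imply that the minimum-cost 2MO on $G'$ equals the minimum-cost relaxed G2M on $G$.

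First, for the forward direction, I would check that, given any feasible 2MO solution $y$ on $G'$, the induced edge-multiset on $G$ is a feasible relaxed G2M. The key observations are: (i) the degree of $i\in V$ in the derived G2M equals $\deg_y(i_m)+\deg_y(i_o)$, which lies in $\{2,4\}$ because integer 2MO solutions satisfy $\deg_y(i_m)=2$ by (\ref{2fodegreeconsman}) and $\deg_y(i_o)\in\{0,2\}$ (the value $\deg_y(i_o)=1$ is ruled out by the odd-cut constraint (\ref{2foFcons}) applied to $S=\{i_o\}$); (ii) each edge $(i,j)$ has multiplicity in $\{0,1,2,3\}$ because exactly three $E'$-edges are associated with it; (iii) no component of the G2M has fewer than three vertices, since a 1-vertex component would force $\deg_y(i_m)=0$, and a 2-vertex component on $\{i,j\}$ would require exactly two of $(i_m,j_m),(i_m,j_o),(j_m,i_o)$ to appear in $y$ with no other $y$-edges incident to $\{i_m,j_m,i_o,j_o\}$, and a three-case check on the pairs shows each pair violates the degree-$2$ constraint at $i_m$ or $j_m$. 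The cost identity is immediate from the choice of costs on $E'$.

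For the converse, given a feasible relaxed G2M on $G$, I would build a 2MO $y$ on $G'$ by routing incidences at each vertex. At an $i\in V$ of G2M-degree 2, all incidences are routed through $i_m$, while at a degree-4 vertex exactly two of the four incidences are routed through $i_o$. Each copy of $(i,j)$ is then assigned to the unique $E'$-edge determined by the pair of routings at its endpoints, with the only forbidden combination being that both endpoints route the same copy through their $o$-vertex. Existence of a globally consistent routing can be shown by reducing to a $b$-matching problem on an auxiliary graph (each degree-4 vertex must claim exactly two of its four incidences for its $o$-side), whose feasibility follows from the G2M degree and multiplicity constraints; equivalently, one can argue constructively by handling multiplicity-3 edges first (where the routing is forced) and extending greedily.

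The main obstacle is the converse direction, namely showing that the local routing choices at degree-4 vertices can always be globally reconciled without creating an $(o,o)$ conflict on any edge copy or reusing an $E'$-edge. The forward direction is a routine check of degree and multiplicity conditions together with the small-component case analysis, and the cost identity holds by construction; the polyhedral payoff of this reduction is that the convex hull of (integer) relaxed G2M solutions on $G$ is the image of the 2MO polytope of Theorem~\ref{2fo} applied to $G'$ under the natural linear projection summing the three $E'$-variables associated with each $(i,j)\in E$.
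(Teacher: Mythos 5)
Your forward direction is correct and is essentially the paper's own proof: condition (i) from the degree constraints at $i_m$ and $i_o$, condition (ii) from the fact that each edge of $G$ has exactly three associated edges in $G'$, and condition (iii) by ruling out isolated vertices (which would force $\deg(i_m)=0$) and two-vertex components (which would have to be isolated doubled edges, and each of the three possible pairs of gadget edges leaves $i_m$ or $j_m$ with degree one). Your case analysis on the pairs is a slightly different but equivalent packaging of the paper's argument, which instead observes that $j_o$ would need a second edge leaving the pair $\{i,j\}$. The one place where you do diverge is in treating the converse --- that every relaxed G2M on $G$ lifts to a 2MO solution on $G'$ --- as ``the main obstacle.'' That direction is neither proved nor needed in the paper: the lemma is only invoked (via Lemma~\ref{lem:boydcarr} and Corollary~\ref{cor:boydcarr}) to push a point of the 2MO polytope, written as a convex combination of integer 2MO solutions, \emph{down} to a relaxed G2M of the same cost, so only the map from 2MO solutions to G2Ms matters; one then discards two copies of any tripled edge at no extra cost. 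Your sketch of the lifting (the $b$-matching/greedy routing that avoids $(i_o,j_o)$ conflicts) is not actually a proof and would need real work --- e.g., an Euler-tour argument with a careful choice of which visit to each degree-4 vertex is labeled $o$ --- but since nothing in the paper rests on it, its absence is not a gap in what the lemma is used for.
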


\iftoggle{abs}{}{
\begin{proof}
Note that condition (i) for node $i$ directly follows from the degree constraints for nodes $i_m$ and $i_o$ in the reduction. Relaxed condition (ii) follows from the fact that for every edge in the G2M instance there are three associated edges in the 2MO instance. Finally, since each node $i_m$ has degree $2$ in the 2MO solution, there cannot be a component of size $1$. Suppose there there is a component of size 2. Then this must be an isolated doubled or quadrupled edge, say $(i,j)$, because of the degree constraints.  Clearly we can't have a quadrupled edge since there are at most three copies of edge $(i,j)$ in the 2MO solution.  We also can't have an isolated doubled edge: in order for the edge to be isolated, we would need $(i_m, j_m)$ and $(i_m, j_o)$ to be in the 2MO solution.  But then $j_o$ must have degree 2, and its second edge must be $(j_o, k_m)$ for some $k\neq i,j$, since there are no edges $(i_o,j_o)$ or $(j_o,j_m)$ in the 2MO instance.
\end{proof}
}

If the edges have nonnegative costs, we may assume with loss of generality that each edge appears at most twice in an optimal G2M solution: if any edge appears three times, we can remove two copies of it without affecting the parity of its endpoints, and the cost cannot increase.

\iftoggle{abs}{
The following lemma shows how to map a solution $x$ of the subtour LP to a solution $y$ to the 2MO polytope corresponding to a G2M.  The mapping is based on some insights gleaned from the proof of Theorem \ref{thm:f2m-109}; we omit this discussion due to space constraints.
}{
We will now use a solution to the subtour LP on $G=(V,E)$ to define a feasible solution to the 2MO instance $G'=(V',E')$ associated with the graphical 2-matching problem on $G$. It will be instructive to first consider the case when the subtour LP solution $x$ is an F2M with no cut edge. In that case, the proof of Theorem~\ref{thm:f2m-109} gives us a way to construct a G2M solution. In fact, it allows us to find a probability distribution on G2Ms, such that the expected cost of the G2M is exactly $\frac{10}9$ times the cost of the F2M solution.
This probability distribution has a number of special properties: (i) if a G2M has positive probability, then each doubled edge is a path edge with $x$-value 1, and has exactly one endpoint that has degree 4; (ii) for each path edge $(i,j)$ with $x$-value 1, the probability that it occurs twice and $i$ has degree 4 is $\frac 1 9$, and the expected number of times $(i,j)$ occurs is $\frac{10}9$.
These observations give a hint as to how we should define a 2MO solution based on a subtour LP solution $x$. We think of the edge $(i_m, j_m)$ as the first copy of the edge $(i,j)$, and $(i_m,j_o)$ as the second copy if $j$ has degree 4, and $(i_o,j_m)$ as the second copy if $i$ has degree 4. Then the probability of $(i_m,j_o)$ and $(i_o,j_m)$ is $\frac 19 x(i,j)$ if $x(i,j)=1$, and the probability of $(i_m,j_m)$ is $\frac 89 x(i,j)$.
This interpretation does not quite work for the cycle edges $(i,j)$ with $x$-value $\frac 12$, since at most one copy occurs in the G2M.

A better interpretation is that we consider a Eulerian walk on each component of the G2M solution, and associate $i_m$ with the first time we enter and leave node $i$, and $i_o$ with the second time we enter and leave node $i$ (if $i$ has degree 4). If we direct the walk in each of the two possible directions with probability $\frac 12$, then the probability we use edge $(i_m,j_m)$ is $\frac 89 x(i,j)$ and the probability we use edge $(i_m,j_o)$ is $\frac 19 x(i,j)$.  We argue this as follows.

For a path edge $(i,j)$ with $x(i,j)=1$, the probability that we use edge $(i_m,j_o)$ is $\frac 19$, since if $j$ has degree 4, we know by the construction that $(i,j)$ is a doubled edge, and $i$ has degree 2. Hence, if $j$ has degree 4, then $(i_m,j_o)$ is in the walk, and the probability that $j$ has degree 4 is $\frac 19$. A similar argument shows that we use $(i_o,j_m)$ with probability $\frac 19$.
Also, the expected number of times we use edge $(i,j)$ in the G2M is $\frac {10}9$, so the probability of using $(i_m,j_m)$ in the walk must be $\frac 89$.

For a cycle edge $(i,j)$ with $x(i,j)=\frac 12$, the probability that we use $(i_m,j_o)$ is $\frac 19 \cdot\frac 12$, since if $j$ has degree 4, then the G2M contains a doubled path edge $(j,k)$ where $x(j,k)=1$ and $k$ has degree 2. Hence the probability that we use $(i_m,j_o)$ is the probability that the walk is directed in such a way that we visit $i$ before the loop from $j$ to $k$ and back, and this happens with probability $\frac 12$. Similarly, the probability that we use $(i_o,j_m)$ is $\frac 1{18}$, and the fact that the expected number of times we use an edge with $x$-value $\frac 12$ in the G2M is $\frac 59$, shows that the probability of using $(i_m,j_m)$ in the walk must be $\frac 49$.

The following lemma states that using the probabilities $\frac 89 x(i,j)$ and $\frac 19 x(i,j)$ to define a fractional solution to the 2MO instance corresponding to the G2M instance $G$ also yields a feasible solution if, rather than an F2M with no cut edge,  $x$ is a feasible solution to the subtour LP on $G$.
}

\begin{lemma} \label{lem:boydcarr}
Given a graph $G = ( V, E )$,
let $x$ be a feasible solution to the subtour LP for $G$. Then the following solution is a feasible solution to the 2MO instance $G'=(V',E')$ associated with the graphical 2-matching instance given by $G$ for $\alpha=\frac 1 9$:
\iftoggle{abs}{ $y( i_m, j_m ) = (1-\alpha) x( i,j )$, $y( i_m, j_o ) = \alpha x( i,j )$, $y ( i_o, j_m ) = \alpha x( i,j )$ }
{
\begin{align*}
    y( i_m, j_m ) &= (1-\alpha) x( i,j ) \\
    y( i_m, j_o ) &= \alpha x( i,j ) \\
    y ( i_o, j_m ) &= \alpha x( i,j )
\end{align*}
}
for all $(i,j) \in E$.
\end{lemma}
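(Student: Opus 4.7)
The plan is to check each family of constraints in the 2MO polytope of Theorem~\ref{2fo} applied to $G'$. The bounds $0 \le y(e) \le 1$ are immediate from $x(e) \in [0,1]$ and $\alpha = 1/9$. For the degree constraint at a mandatory node $i_m$, the incident edges of $G'$ are precisely the two copies $(i_m,j_m)$ and $(i_m,j_o)$ associated with each $(i,j) \in E$, so their total $y$-mass is $\sum_j \bigl((1-\alpha)+\alpha\bigr)x(i,j) = \sum_{e \in \delta_G(i)} x(e) = 2$ by~(\ref{degreecons}). For an optional node $i_o$, only the copies $(j_m,i_o)$ exist, contributing $\alpha \sum_j x(i,j) = 2\alpha = 2/9 \le 2$.

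The substantial task is to verify the odd-matching inequalities (\ref{2foFcons}). Rewriting the left-hand side as $y(\delta(S)) + \sum_{e \in F}(1 - 2y(e))$, I would parametrize $S$ via $A = \{i : i_m \in S\}$ and $B = \{i : i_o \in S\}$, and express $y(\delta(S))$ in terms of these sets and $x$ by case-analyzing which of the three $G'$-copies of each $(i,j) \in E$ cross the cut. Two immediate reductions help: since $y(e) \le 1 - \alpha$ for every $e \in E'$, we have $1 - y(e) \ge \alpha$ and so $\sum_{e \in F}(1-y(e)) \ge \alpha|F|$, which already settles the inequality whenever $|F| \ge 1/\alpha = 9$; furthermore, the optional-type edges $(i_m,j_o)$ and $(i_o,j_m)$ satisfy $y(e) \le \alpha$, so each such edge in $F$ contributes at least $1-\alpha$ to the sum. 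These two bounds, together with the observation that cuts with $A \in \{\emptyset, V\}$ touch only optional-type edges (which carry $y$-mass at most $\alpha$), dispose of many cases outright.

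For the remaining cases, with $A$ a proper nonempty subset of $V$ and $|F| \in \{1,3,5,7\}$, the plan is to combine the subtour inequality $x(\delta_G(A)) \ge 2$ (obtained from~(\ref{subtourcons}) when $3 \le |A| \le |V|-3$, and otherwise derived from~(\ref{degreecons})--(\ref{boundscons})) with the lower bound $y(\delta(S)) \ge (1-\alpha)\,x(\delta_G(A)) \ge 16/9$ one gets by accounting only for the mandatory-type copies of cut edges. The refined bound $\sum_{e \in F}(1-y(e)) \ge (1-\alpha)k + \alpha(|F|-k)$, where $k$ counts the optional-type edges in $F$, then closes the arithmetic for each $|F| \le 7$, provided one tracks the additional optional-type edges of $\delta(S)$ that appear whenever $B$ interacts nontrivially with~$A$.

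The main obstacle I anticipate is precisely this numerical balancing act for $|F| \in \{3,5,7\}$ when $F$ is dominated by mandatory-type edges: the slack coming from $\alpha = 1/9$ is tight, so the argument must simultaneously exploit the contribution of optional-type cut edges across the $B$-boundary and the fact that every mandatory-type edge in $F$ is itself a summand of $(1-\alpha)\,x(\delta_G(A))$ and therefore already ``paid for'' by the subtour lower bound rather than double-counted. Confirming that $\alpha = 1/9$ works in every combination of $|F|$, $k$, and boundary configuration is the delicate step; that no smaller $\alpha$ suffices is consistent with the tight example of Figure~\ref{fig:worstcase}.
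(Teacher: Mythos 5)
Your setup matches the paper's: the degree and bounds constraints are immediate, the odd-$F$ constraint is rewritten as $y(\delta(S)) + |F| - 2y(F) \ge 1$, the cut is parametrized by which of the three copies of each $(i,j)$ cross it, and the cases $A\in\{\emptyset,V\}$, large $|F|$, and $F$ containing optional-type edges are dispatched first. Your observation that $|F|\ge 9$ is settled by $1-y(e)\ge\alpha$ alone is a correct simplification, and your argument for $|F|=1$ (namely $y(\delta(S))\ge(1-\alpha)x(\delta_G(A))\ge 16/9$ together with $2y(e')\le 16/9$) is in fact cleaner than the paper's.

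However, there is a genuine gap at exactly the point you flag as ``the delicate step,'' and the tools you propose do not close it. Take $|F|=3$ with all three edges of mandatory type, say $F=\{(i_m,j_m),(i'_m,j'_m),(i''_m,j''_m)\}$, each corresponding $G$-edge crossing $\delta_G(A)$ with $x$-value $1$ and with no optional copy crossing the cut (the paper's type I), and $x(\delta_G(A))=3$. Then your bounds give $\sum_{e\in F}(1-y(e))\ge 3\alpha=\tfrac13$ and $y(\delta(S)\setminus F)\ge(1-\alpha)\bigl(x(\delta_G(A))-3\bigr)=0$, so you only reach $\tfrac13$, not $1$; the subtour inequality $x(\delta_G(A))\ge2$ is already exhausted by the $F$-edges themselves and contributes nothing further. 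The missing idea is the paper's ``lonely node'' accounting: whenever $i_m$ and $i_o$ lie on opposite sides of $S$ (which is forced for both endpoints of a type-I edge in $F$), then for \emph{every} neighbor $j'$ of $i$ at least one of $(i_o,j'_m),(j'_m,i_m)$ crosses the cut, so the degree constraint $x(\delta_G(i))=2$ injects an extra $2\alpha$ of $y$-mass per such node; and because $F$ is a matching, each lonely node can be charged to at most one $F$-edge, so these bonuses add up without double-counting. A careful split of each edge's contribution between the edge and its lonely endpoints (e.g.\ $(1-3\alpha)x$ to a type-I edge and $\alpha x$ to each endpoint) then yields $z(\delta(S))\ge 3\alpha|F|$, which is exactly $1$ at $\alpha=\tfrac19$, $|F|=3$ --- the bound is tight, so no amount of coarser bookkeeping will recover it. Until you incorporate this per-vertex use of the degree constraints together with the matching property of $F$, the cases $|F|\in\{3,5,7\}$ with few or no optional-type edges in $F$ remain unproved.
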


Note that the cost of the constructed G2M solution is exactly $\frac {10}9$ times the cost of the solution of the subtour LP.  Thus our result follows immediately from the lemma.

\begin{corollary} \label{cor:boydcarr}
There exists a G2M of cost at most $\frac {10}9$ times the value of the subtour LP.
\end{corollary}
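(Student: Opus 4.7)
The plan is to verify each of the four constraint families of the 2MO polytope of Theorem~\ref{2fo} for the proposed $y$. The degree and bound constraints dispatch by direct substitution. For a mandatory node $i_m$, the $G'$-edges in $\delta(i_m)$ are $(i_m,j_m)$ and $(i_m,j_o)$ as $j$ ranges over neighbors of $i$ in $G$, so $\sum_{e \in \delta(i_m)} y(e) = \sum_{j:(i,j)\in E}\bigl[(1-\alpha)+\alpha\bigr] x(i,j) = \sum_{e \in \delta_G(i)} x(e) = 2$ by the subtour degree equality at $i$. For an optional node $i_o$, only the edges $(i_o,j_m)$ appear, summing to $2\alpha = 2/9 \le 2$. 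The bounds hold because $y(e) \le (1-\alpha)\,x(i,j) \le 8/9 \le 1$.

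The substantive work is the blossom-type inequality (\ref{2foFcons}) for every $S \subseteq V'$ and odd matching $F \subseteq \delta(S)$. My first move is to exploit the uniform upper bound $y(e) \le 8/9$: since $1-y(e) \ge 1/9$ on every edge, we have $\sum_{e \in F}(1-y(e)) \ge |F|/9$, which immediately settles the case $|F| \ge 9$ and confines the remaining analysis to $|F| \in \{1,3,5,7\}$. For these, I would set $W_m = \{i \in V : i_m \in S\}$ and $W_o = \{i \in V : i_o \in S\}$ and expand the LHS as a sum over unordered edges $\{i,j\} \in E$: each such edge has three copies in $G'$, and its contribution is a combination of $(1-\alpha)\,x(i,j)$ and $\alpha\,x(i,j)$ terms determined by how those three copies meet $S$ and $F$. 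Using the equality degree constraint (\ref{2fodegreeconsman}) as a flip---toggling whether $i_m \in S$ in a controlled way that adjusts the LHS by a quantity computable from $\delta_G(i)$ and from matching edges at $i_m$---I would normalize $(W_m, W_o)$ so that the resulting expression becomes essentially $(1+\alpha)\,x(\delta_G(W)) + \varepsilon$ for some $W \subseteq V$ built from $W_m$ and $W_o$, with $\varepsilon$ a parity correction arising from $F$. The subtour inequality $x(\delta_G(W)) \ge 2$ for $3 \le |W| \le n-3$ then provides ample slack, and the boundary cases $|W| \in \{0,1,2\}$ and their complements are handled directly by the degree equalities on $V$, together with the $1/9$ slack per matching edge.

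The main obstacle I anticipate is bookkeeping rather than substance: the three-copy structure of $E'$ means each edge of $G$ can contribute to $\delta(S)$ and to $F$ in several different configurations, and tracking signs carefully through the flip normalization and through the small-$|F|$, small-$|W|$ boundary cases is the delicate part. Conceptually, once the $1/9$-per-matching-edge slack and the approximate identity $\sum_{e \in \delta(S)} y(e) \approx (1+\alpha)\,x(\delta_G(W))$ for $S = \{i_m, i_o : i \in W\}$ are in hand, the choice $\alpha = 1/9$ is exactly what balances the estimate so that the $10/9$ ratio falls out of the arithmetic.
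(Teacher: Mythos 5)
Your verification of the degree and bound constraints is correct, and you have rightly identified the blossom-type constraints~(\ref{2foFcons}) as the only substantive obstacle; the observation that $1-y(e)\ge \frac 19$ disposes of $|F|\ge 9$ is valid but does not shorten the real work, since the paper's argument for $|F|\ge 3$ is already uniform. The genuine gap is the ``flip normalization'' step. You propose to toggle whether $i_m\in S$ so as to reduce to an aligned cut $S=\{i_m,i_o:i\in W\}$, for which $y(\delta(S))=(1+\alpha)\,x(\delta(W))$. But such a flip is not available: if $F$ contains an edge incident on the node being flipped, moving that node to the other side removes the edge from $\delta(S)$, so $F\subseteq\delta(S)$ fails and the parity of $|F|$ changes; moreover you would need the left-hand side not to increase under the flip, which you do not argue and which does not hold in general. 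This normalization is precisely what the paper cannot do, and it is why the proof of Lemma~\ref{lem:boydcarr} instead performs a direct case analysis: it classifies each edge $(i,j)\in E$ into five types according to which of its three copies cross $S$, introduces \emph{lonely} nodes $i$ with $|\{i_m,i_o\}\cap S|=1$ (each collecting a charge of $2\alpha$ from the degree constraint at $i$), charges the rest to the edges of the induced cut $R=\{i:i_m\in S\}$, and then splits on $|F|=1$ versus $|F|\ge 3$, with the all-optional case $R=\emptyset$ handled separately by a degree constraint at a single vertex.

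A second, smaller inaccuracy: the inequality $x(\delta(R))\ge 2$ is the binding ingredient only when $|F|=1$. For $|F|\ge 3$ the paper's computation bounds $z(\delta(S))\ge 3\alpha|F|$ using only $x(e)\le 1$ and the matching property of $F$ (which limits how many charged edges can meet a lonely node), and it is this case that forces $\alpha\ge\frac 19$; the cost factor $1+\alpha$ then yields $\frac{10}{9}$. So the arithmetic does balance at $\alpha=\frac 19$ as you say, but not for the reason you give. Finally, to pass from feasibility of $y$ to the corollary you still need the integrality of the 2MO polytope (Theorem~\ref{2fo}) and the reduction of Lemma~\ref{lem:red} back to a graphical 2-matching, dropping triple copies of edges; these steps are routine but should be invoked explicitly.
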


\begin{proofof}{Lemma~\ref{lem:boydcarr}}
We need to show that $y$ satisfies the constraints~(\ref{2fodegreeconsman})-(\ref{2foboundscons}) on $G'$, where $G'$ is defined as in Lemma~\ref{lem:red}.  Constraints~(\ref{2fodegreeconsman}),  (\ref{2fodegreeconsopt}) and (\ref{2foboundscons}) are obviously met, and we only need to show that constraints~(\ref{2foFcons}) are met.  To this end, fix $S \subseteq V'$, $F \subseteq \delta( S )$ where $F$ is a matching and $|F|$ is odd. We define $z( e' ) = y( e' )$ if $e'\in \delta(S)\backslash F$ and $z(e')=1-y(e')$ if $e'\in F$.  For simplicity, for any set of edges $X \subseteq E'$, we define $z(X) = \sum_{e' \in X} z(e')$.
Then we need to show that $z(\delta(S)) \ge 1$.

First, suppose $S$ does not contain any node $i_m$ for any $i\in V$.
For any $j_o \in S$, we have that $z(\delta(S)\cap \delta(j_o)) = z(\{(i_m,j_o):i\in V\})$.
Since $|F|\ge 1$, there exists some $j_o\in S$ such that $F$ contains some edge incident on $j_o$, say $(i'_m, j_o)$.
Then, $z(\{(i_m,j_o):i\in V\}) = 1-\alpha x(i',j) + \sum_{i\in V: i\neq i'} \alpha x(i,j) = \alpha x(\delta(j)) + 1 - 2\alpha x(i',j)$.
Now, note that $x(\delta(j))=2$ and $x(i',j)\le 1$, hence $z(\delta(S)\cap \delta(j_o))\ge 1$.

By symmetry, it remains to consider the case when both $S$ and $V'\backslash S$ contain a node $i_m$ for some $i\in V$.

We consider an edge $e=(i,j)\in G$ such that at least one of the three edges $(i_o,j_m), (j_m, i_m), (i_m,j_o)$ crosses the cut $S$ in $G'$.
Note that there are $2^3-1=7$ possible choices for the edges that cross the cut. We discern five different types of edges in $G$ for which at least one of the three corresponding edges crosses the cut (type II and type V each cover 2 of the possible choices):
\begin{itemize}[noitemsep]
\item[(I)] The edge $(i_m, j_m)$ crosses the cut.
\item[(II)] The edges $(i_o,j_m)$ and $(j_m, i_m)$ or the edges $(j_m, i_m)$ and $(i_m, j_o)$ cross the cut.
\item[(III)] The edges $(i_o,j_m), (j_m, i_m)$ and $(i_m, j_o)$ cross the cut.
\item[(IV)] The edges $(i_o,j_m), (i_m, j_o)$ cross the cut.
\item[(V)] The edge $(i_o,j_m)$ or the edge $(i_m, j_o)$ crosses the cut.
\end{itemize}
\iftoggle{abs}
{Figure~\ref{fig:types} in Appendix \ref{sec:figs} illustrates the five types.}
{Figure~\ref{fig:types} illustrates the five types.
\begin{figure}
\begin{center}
\subfloat[Type I.]{\includegraphics[width=.22\textwidth]{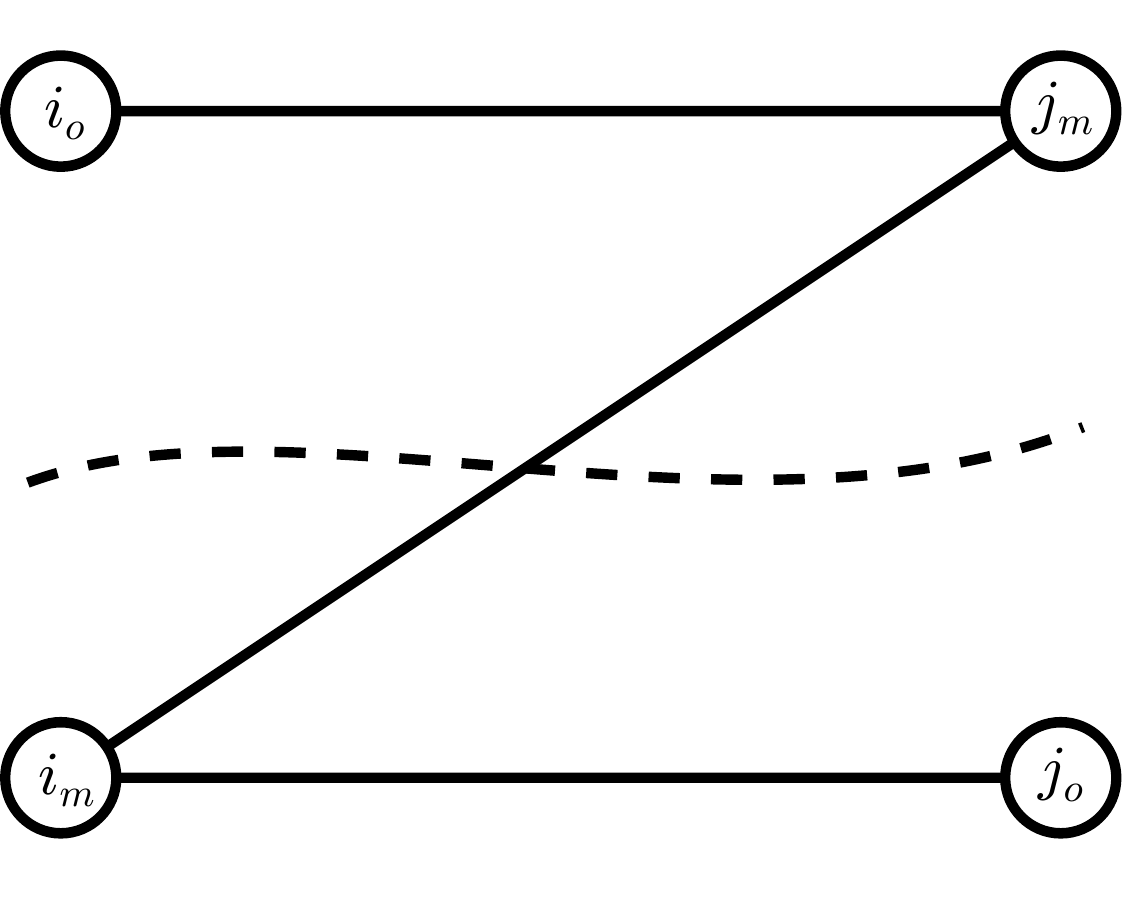}}\qquad
\subfloat[Type II.]{\includegraphics[width=.22\textwidth]{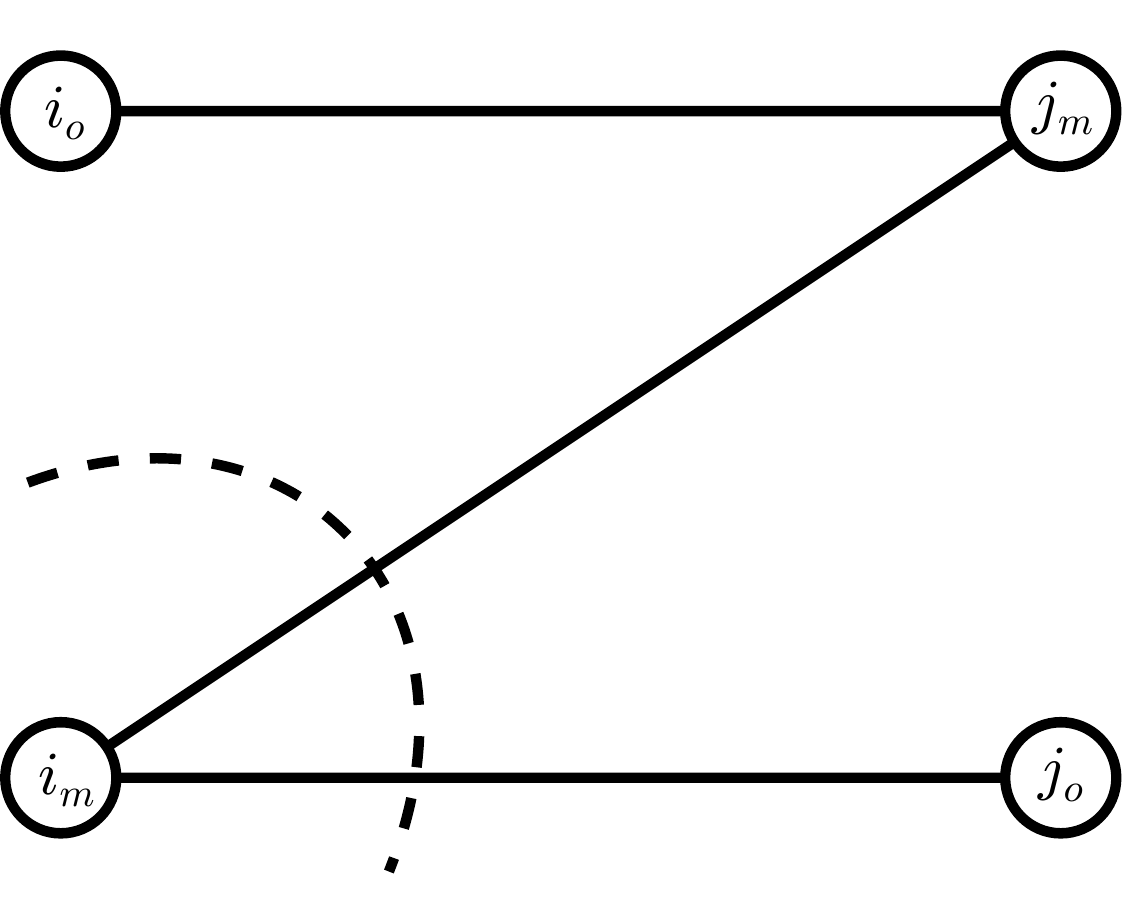}}\qquad
\subfloat[Type III.]{\includegraphics[width=.22\textwidth]{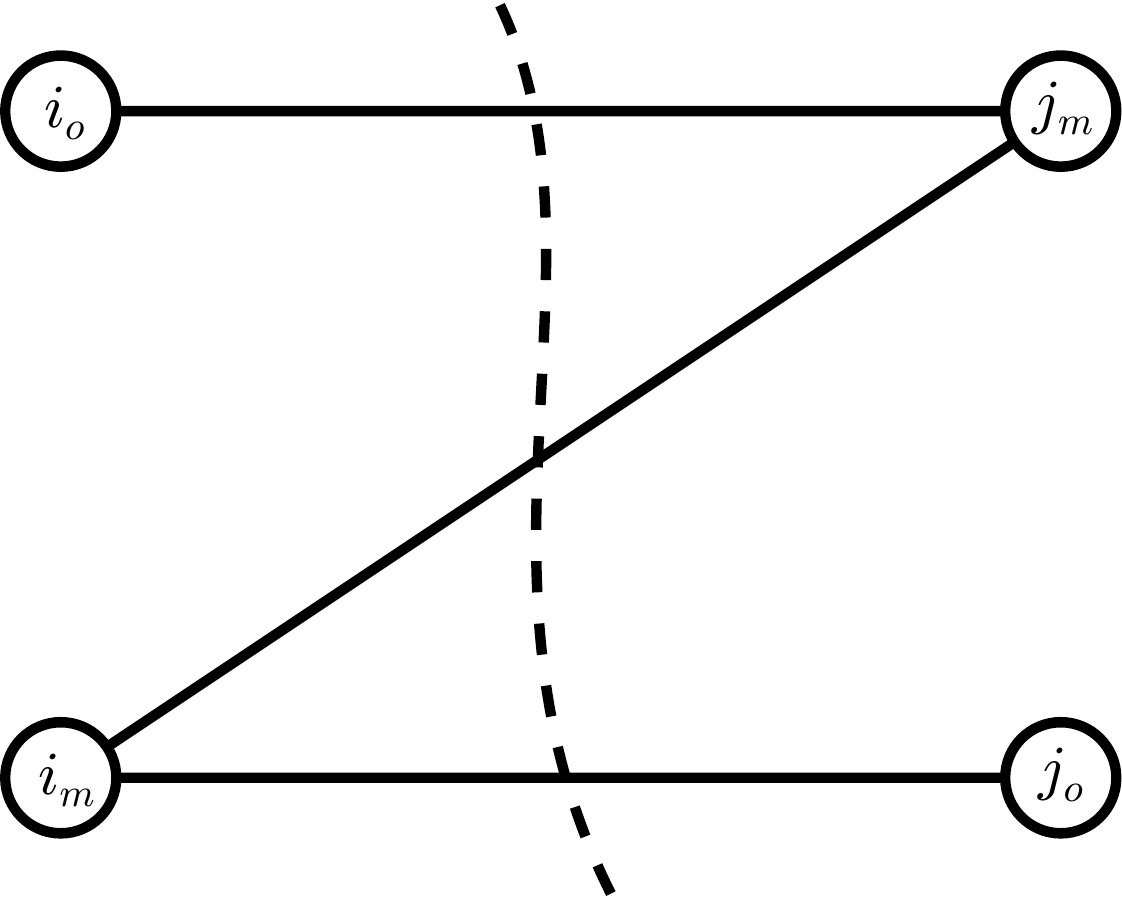}}\\
\subfloat[Type IV.]{\includegraphics[width=.22\textwidth]{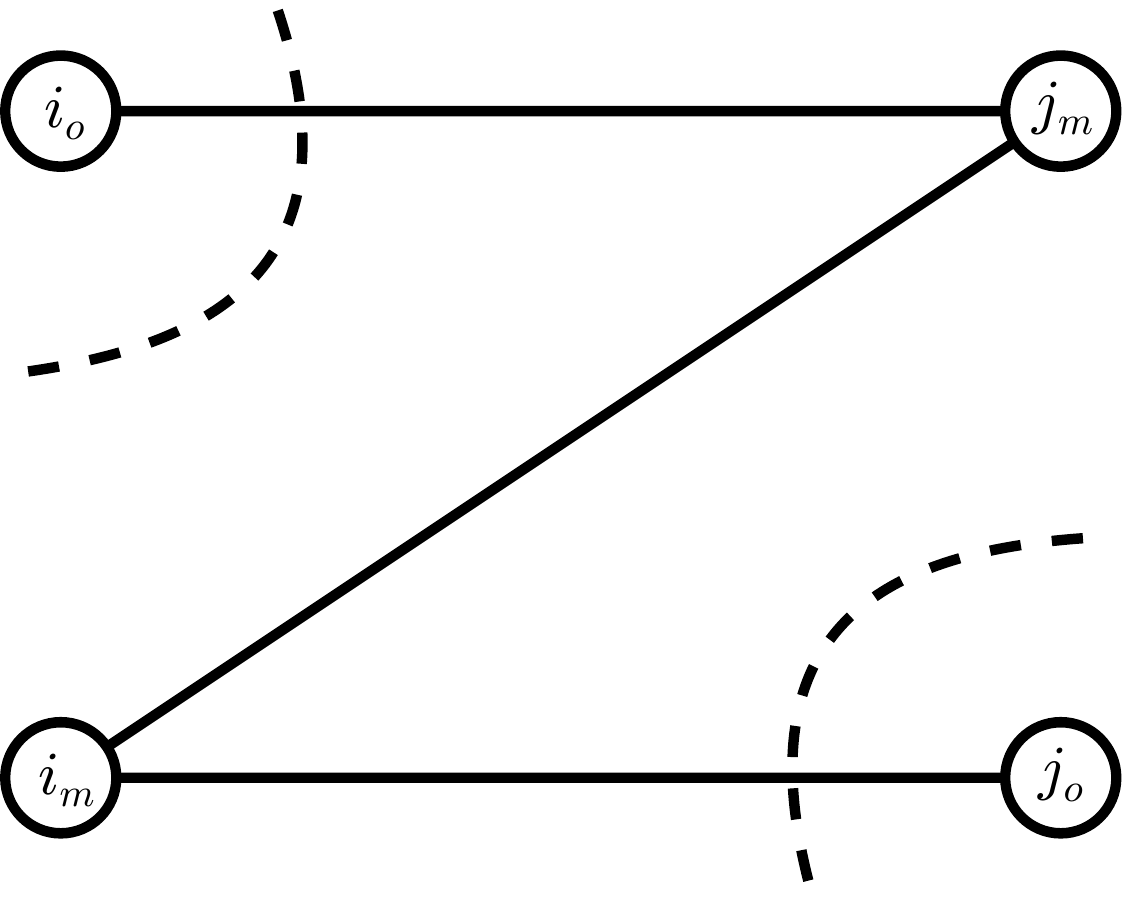}}\qquad
\subfloat[Type V.]{\includegraphics[width=.22\textwidth]{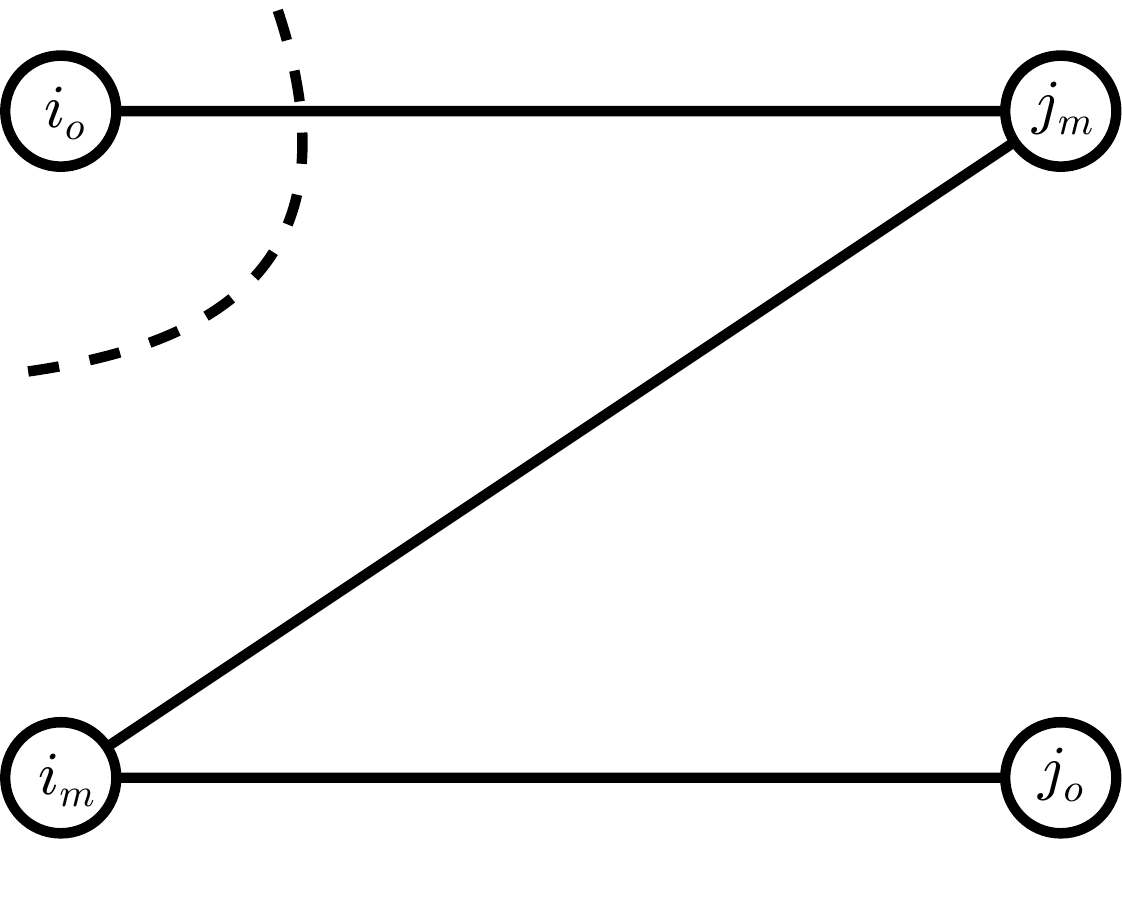}}\qquad
\end{center}
\caption{Illustrations of the five types of cuts of the edges in the reduction. The $y$-value on the top and bottom edge is $\alpha x(i,j)$ and the $y$-value on the middle edge is $(1-\alpha)x(i,j)$.}
\label{fig:types}
\end{figure}
}

We use the notation $i_*$ to denote either $i_m$ or $i_o$, and we will say an edge $e'=(i_*,j_*)\in G'$ is in a gadget of type I, II,~\ldots, V, if the edge $(i,j)\in G$ is an edge of that type.

We now consider three different cases, depending on the set $F$.

\begin{claim}\label{claim1}
If $F$ contains an edge in a gadget of type IV or V, then $z(\delta(S)) \ge 1$.
\end{claim}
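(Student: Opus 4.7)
\noindent\emph{Proof plan.}
The plan is to rewrite $z(\delta(S))$ in two equivalent ways and then split on $|F|$. Let $e_0 \in F$ be the distinguished edge lying in a type IV or V gadget, say on underlying edge $(i,j)\in E$. By construction $y(e_0) = \frac{1}{9}x(i,j) \le \frac{1}{9}$, so $z(e_0) \ge \frac{8}{9}$; and more generally $y(e') \le \frac{8}{9}x(\cdot) \le \frac{8}{9}$ for every $e'\in E'$, so $z(e') \ge \frac{1}{9}$ for every $e' \in F$. From the definition of $z$,
\[z(\delta(S)) \;=\; z(F)+y(\delta(S)\setminus F) \;=\; y(\delta(S))+|F|-2y(F).\]

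If $|F| \ge 3$, the first identity already suffices: $z(\delta(S)) \ge z(F) \ge z(e_0) + (|F|-1)\cdot \frac{1}{9} \ge \frac{8}{9}+\frac{2}{9} = \frac{10}{9}$, and the claim follows.

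The substantive case is $|F| = 1$, so $F = \{e_0\}$. The second identity becomes $z(\delta(S)) = y(\delta(S)) + 1 - \frac{2}{9}x(i,j)$, so it is enough to prove $y(\delta(S)) \ge \frac{2}{9}x(i,j)$; for this I will in fact establish the stronger $y(\delta(S)) \ge \frac{2}{9}$ (using $x(i,j)\le 1$). By the symmetry between $i$ and $j$, I may assume $e_0 = (i_o, j_m)$; then in both type IV and type V.a one has $s_{i_m} \ne s_{i_o}$, i.e., $i_m$ and $i_o$ lie on opposite sides of $S$. For each neighbor $k$ of $i$ in $G$, the two crossing conditions $s_{i_m} \ne s_{k_m}$ and $s_{i_o} \ne s_{k_m}$ for the edges $(i_m,k_m)$ and $(i_o,k_m)$ are complementary, so exactly one of these two edges lies in $\delta(S)$ and contributes at least $\min\{\frac{8}{9},\frac{1}{9}\}\,x(i,k) = \frac{1}{9}x(i,k)$ to $y(\delta(S))$. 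Summing over all neighbors $k$ of $i$ and invoking the subtour degree constraint $\sum_{k} x(i,k) = 2$ yields $y(\delta(S)) \ge \frac{2}{9}$, completing the case.

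\noindent\emph{Main obstacle.}
The work is entirely in the $|F|=1$ case: a single $F$-edge only supplies $1 - \frac{1}{9}x(i,j)$ of $z$-value on its own, and the missing $\frac{1}{9}x(i,j)$ must be recovered elsewhere in the cut. The unifying structural fact, namely that both type IV and type V.a place $i_m, i_o$ on opposite sides of $S$, is what lets me treat the two types by a single argument rooted at the $G$-vertex $i$, using only the subtour LP degree constraint at $i$. In type IV one could alternatively exploit the sibling gadget edge $(i_m,j_o)\in\delta(S)$ (which contributes exactly $\frac{1}{9}x(i,j)$), but that shortcut breaks in type V.a precisely because $(i_m,j_o)$ does not cross the cut there.
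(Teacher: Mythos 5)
Your proof is correct and rests on the same key observation as the paper's: an $F$-edge in a type IV or V gadget forces $i_m$ and $i_o$ onto opposite sides of $S$, so exactly one of $(i_m,k_m),(i_o,k_m)$ crosses for each $k$ and the degree constraint at $i$ delivers $2\alpha=\frac 29$ of crossing $y$-mass, covering the $2\alpha x(i,j)$ deficit left by the $F$-edge. The only difference is cosmetic: you dispose of $|F|\ge 3$ separately via $z(F)\ge \frac{10}9$, whereas the paper's path-by-path bound $z(\delta(S))\ge \sum_{j'\neq j}\alpha x(i,j') + 1 - \alpha x(i,j)$ handles all odd $|F|$ uniformly.
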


\iftoggle{abs}{}{
\begin{proof}
Let $e'\in F$ be contained in a gadget of type IV or V. Note that $e'$ has one endpoint in $V'_\man$ and one endpoint in $V'_\opt$.
Let $e'=(i_o,j_m)$. Since $(j_m, i_m)$ does not cross the cut, $i_o$ and $i_m$ are on different sides of the cut.

Hence, the paths $\{(i_o,j'_m), (j'_m, i_m)\}$ cross the cut $S$ for every $j'\in V$.
Each of these paths thus contribute at least $\alpha x(i,j')$ to $z(\delta(S))$ for $j' \neq j$.
Also, since $e'=(i_o,j_m)\in F$, $z(e')= 1- \alpha x(i,j)$.
We thus get that $z(\delta(S)) \ge \sum_{j'\neq j} \alpha x(i,j') + 1-\alpha x(i,j) = \sum_{j'} \alpha x(i,j') + 1 - 2\alpha x(i,j)\ge 1$, where the last inequality follows since $\sum_{j'}x(i,j')=2$ by the degree constraints, and $x(i,j)\le 1$.
\end{proof}
}

For the remaining cases, we associate a cut $R$ in the graph $G$ with the cut $S$ in $G'$:
let $R=\{i\in V: i_m\in S\}$. Note that $R, V\backslash R$ are not empty.
Note that if $e$ is of type I, II, or III, then the edge $(i_m, j_m)$ crosses the cut, and hence, the edge $e$ crosses the cut $R$ in~$G$.

In the remainder of this proof, we will write $z(\delta(S))= y(\delta(S)) + |F| - 2 y(F)$, and we will give a lower bound on $y(\delta(S))$ to show that $z(\delta(S))\ge 1$.
In order to give a lower bound on $y(\delta(S))$, we need to use the fact that $x$ satisfies degree constraints for each node, and that $x(\delta(R))\ge 2$.
It will therefore be convenient to relate the contribution to $y(\delta(S) )$ of the three edges $(i_o,j_m)$, $(j_m,i_m)$, and $(i_m, j_o)$ to the edge $(i,j)\in G$, if $(i,j)\in \delta(R)$, but also to the nodes $i$ and $j$ for certain types of nodes $i,j\in V$.

In particular, we say a node $i\in V$ is a {\em lonely node} if $|\{i_m,i_o\}\cap S|=1$. We let $L$ be the set of lonely nodes.
We assign each lonely node $i$ an amount of $\alpha x(i,j)$, for each edge $(i,j)$ of type I, II,~\ldots, V.
Note that for each lonely node $i$, the paths $\{(i_o, j_m), (j_m, i_m)\}$ cross the cut for all $j\in V$, and hence, each lonely node gets assigned
$\alpha \sum_j x(i,j)$, which by the degree constraints is equal to $2\alpha$.

\begin{itemize}[noitemsep]
\item[(I)]
For an edge $(i,j)$ of type I, the total contribution of the three edges $(i_o,j_m), (j_m,i_m), (i_m, j_o)$ to $y(\delta(S))$ is $(1-\alpha)x(i,j)$.
Note that both $i$ and $j$ are lonely nodes.
We assign $(1-3\alpha)x(i,j)$ to the edge $(i,j)$, and $\alpha x(i,j)$ each to nodes $i$ and $j$.
\item[(II)]
For an edge $(i,j)$ of type II, the total contribution of the three edges $(i_o,j_m), (j_m,i_m), (i_m, j_o)$ to $y(\delta(S))$ is $x(i,j)$.
Note that only one of $i,j$ is a lonely node, and we therefore assign $(1-\alpha)x(i,j)$ to the edge $(i,j)$, and $\alpha x(i,j)$ to the lonely node among $i,j$.
\item[(III)]
For an  edge $(i,j)$ of type III, the total contribution of the three edges $(i_o,j_m), (j_m,i_m), (i_m, j_o)$ to $y(\delta(S))$ is $(1+\alpha)x(i,j)$, and neither $i$ nor $j$ is a lonely node. We therefore assign $(1+\alpha)x(i,j)$ to the edge $(i,j)$.
\item[(IV)]
For an  edge $(i,j)$ of type IV, the total contribution of the three edges $(i_o,j_m), (j_m,i_m), (i_m, j_o)$ to $y(\delta(S))$ is $2\alpha x(i,j)$. Since $(i,j)\not \in \delta(R)$ and both $i$ and $j$ are lonely nodes, we assign 0 to $(i,j)$ and $\alpha x(i,j)$ each to $i$ and $j$.
\item[(V)]
For an  edge $(i,j)$ of type V, the total contribution of the three edges $(i_o,j_m), (j_m,i_m), (i_m, j_o)$ to $y(\delta(S))$ is $\alpha x(i,j)$. Since $(i,j)\not \in \delta(R)$ and only one of $i$ and $j$ is a lonely node, we can assign 0 to $(i,j)$ and $\alpha x(i,j)$ to the lonely node.
\end{itemize}

By the argument above, we have assigned $2\alpha$ to each lonely node. We now show how this fact, combined with the fact that $x(\delta(R))\ge 2$ and the assignment of values to the edges in $\delta(R)$, allows us to conclude that $z(\delta(S))\ge 1$.

\begin{claim}
If $|F|=1$, then $z(\delta(S))\ge 1$.\end{claim}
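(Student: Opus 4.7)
The plan is to reduce to the case where the single edge $e' \in F$ lies in a gadget of type I, II, or III (so that the corresponding edge $e=(i,j) \in \delta(R)$). Claim~1 already handles $e'$ in a type IV or V gadget, so this reduction is immediate. Next, I would rewrite $z(\delta(S)) = y(\delta(S)) + 1 - 2y(e')$, so that the target $z(\delta(S)) \ge 1$ becomes $y(\delta(S)) \ge 2y(e')$.

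The key tool is the decomposition set up in the preceding paragraphs: each lonely node $i \in L$ accounts for $2\alpha$ of $y(\delta(S))$ (via the paths $\{(i_o,j_m),(j_m,i_m)\}$ over all $j$), and each edge $e \in \delta(R)$ accounts for at least $(1-3\alpha)x(e)$ of $y(\delta(S))$ (the minimum assignment, attained by type I). Using $x(\delta(R)) \ge 2$, this yields the global bound
\[
y(\delta(S)) \ge 2\alpha |L| + (1-3\alpha)x(\delta(R)).
\]

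I would then split into two cases on $y(e')$. The easy case is $y(e') = \alpha x(i,j)$ (an outer gadget edge): then $2y(e') \le 2\alpha = 2/9$, which is dominated by $(1-3\alpha)\cdot 2 = 4/3$. The main case is $e' = (i_m,j_m)$ with $y(e') = (1-\alpha)x(i,j)$, where the required lower bound is $2(1-\alpha)x(i,j) = \tfrac{16}{9}x(i,j)$. Here I would refine the decomposition by separating the contribution $a(e)$ of the gadget of $e$ (which depends on the type) and the lonely-node contributions from $i,j$, then bound the remainder of $\delta(R)$ by $(1-3\alpha)(x(\delta(R)) - x(e))$. A short arithmetic check in each of the three subcases, using $x(\delta(R)) \ge 2$ and $x(i,j) \le 1$, closes the inequality.

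The main obstacle I anticipate is the type I subcase, where both $i$ and $j$ are lonely and the edge $e$ only contributes the minimum $(1-3\alpha)x(e)$. Here the argument is tight: the lonely contribution $4\alpha = 4/9$ together with $(1-3\alpha)\cdot 2 = 12/9$ from $x(\delta(R))\ge 2$ just reaches $16/9 = 2(1-\alpha)$, which matches $2y(e')$ exactly when $x(i,j)=1$. In the type II and III subcases the tightness is instead at $x(i,j)=1$ via a linear inequality in $x(i,j)$. Verifying that the numerics for $\alpha = 1/9$ work in the tight subcase is the only subtle point; everything else is routine bookkeeping of the assignment.
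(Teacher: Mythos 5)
Your argument is correct and rests on the same machinery as the paper's proof of this claim: the lonely-node/edge assignment scheme set up just before the claim, combined with $x(\delta(R))\ge 2$ and $x(i,j)\le 1$. The paper organizes the cases more economically --- it bounds $2y(e')\le 2(1-\alpha)x(i,j)\le 2-2\alpha$ up front and then splits only on whether $|L|\le 1$ (which rules out type~I edges, so every edge of $\delta(R)$ receives at least $(1-\alpha)x(e)$, giving $y(\delta(S))\ge 2-2\alpha$ directly) or $|L|\ge 2$ (which already supplies $4\alpha$ on top of $(1-3\alpha)x(\delta(R))$), thereby avoiding your per-gadget-type subcases; your refined accounting reaches the same tight point at $x(i,j)=1$ and is equally valid.
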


\iftoggle{abs}{}{
\begin{proof}
Let $F=\{ e' \}$.
Let $(i,j)$ be such that $e'=(i_*, j_*)$. We will show that $z(\delta(S)) = y(\delta(S))+1-2y(e') \ge 1$.
Note that $2y(e') \le 2(1-\alpha) x(i,j) \le 2-2\alpha$, so it is enough to show that $y(\delta(S)) \ge 2-2\alpha$.

First, suppose that $|L|\le 1$. Then, there is no edge of type I, so to each edge $e\in \delta(R)$, we assigned at least $(1-\alpha) x(e)$.
Hence, $y(\delta(S)) \ge (1-\alpha) x(\delta(R))\ge 2-2\alpha$, since $x(\delta(R))\ge 2$ by the subtour elimination constraints.

If $|L|\ge 2$, then we assigned $2\alpha$ to each node in $L$, giving at least $4\alpha$. We assigned at least $(1-3\alpha)x(e)$ to each edge $e\in \delta(R)$. Therefore, $y(\delta(S)) \ge 4\alpha + (1-3\alpha) x(\delta(R))\ge 2-2\alpha$, where we again use that $x(\delta(R))\ge 2$.
\end{proof}
}

\begin{claim}
If $|F|\ge 3$, then $z(\delta(S))\ge 1$.\end{claim}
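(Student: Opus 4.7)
My plan is to split into sub-cases by the number of \emph{top/bottom} edges (those of the form $(i_o,j_m)$ or $(i_m,j_o)$) in $F$, versus \emph{middle} edges $(i_m,j_m)$, and to use the decomposition $z(\delta(S)) = \sum_{e' \in \delta(S)\setminus F} y(e') + \sum_{e' \in F}(1-y(e'))$ together with $y(\delta(S)\setminus F) \ge 0$ and the elementary bounds $1-y(e')\ge \alpha$ for middle $e'$ and $1-y(e')\ge 1-\alpha$ for top/bottom $e'$ (both immediate from $x(i,j)\le 1$). By Claim~\ref{claim1} I may assume every edge of $F$ lies in a gadget of type I, II, or III.

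If $F$ contains at least two top/bottom edges, then $z(\delta(S)) \ge 2(1-\alpha) = \tfrac{16}{9} > 1$. If $F$ contains exactly one top/bottom edge, then since $|F|\ge 3$ and $|F|$ is odd there are at least two middle edges in $F$, and $z(\delta(S)) \ge (1-\alpha) + 2\alpha = 1+\alpha = \tfrac{10}{9} > 1$.

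The interesting sub-case is when every edge of $F$ is a middle edge. The trivial argument only gives $|F|\alpha \ge \tfrac{1}{3}$, so I would bring in the full assignment from the preamble (edges of $\delta(R)$ of type I, II, III receive $(1-3\alpha)x(i,j)$, $(1-\alpha)x(i,j)$, $(1+\alpha)x(i,j)$ respectively, and each lonely node receives $2\alpha$). Since $F$ is a matching in $G'$, the middle edges of $F$ project to a matching $M_F \subseteq \delta(R)$ in $G$ with $|M_F|=|F|$, whose endpoints in $V$ are pairwise distinct. I would then charge each $(i,j)\in M_F$ with the sum of its edge-assignment, the corresponding $F$-term $1-2y(i_m,j_m)=1-2(1-\alpha)x(i,j)$, and the $2\alpha$ attached to each of its $2$, $1$, or $0$ lonely endpoints in types I, II, III. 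A short calculation shows these charges are
\[
1+4\alpha-(1+\alpha)x(i,j),\qquad 1+2\alpha-(1-\alpha)x(i,j),\qquad 1-(1-3\alpha)x(i,j),
\]
respectively, each minimized at $x(i,j)=1$ with common value $3\alpha$. Summing over $M_F$ gives $z(\delta(S)) \ge 3\alpha\,|F| \ge 9\alpha = 1$.

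The main obstacle is the bookkeeping in this last sub-case: charges to distinct edges of $M_F$ must draw on disjoint parts of the lower bound on $y(\delta(S))$. This is exactly where $M_F$ being a matching in $G$ matters; since its edges share no endpoint, the lonely nodes charged to different $M_F$-edges are pairwise distinct elements of $L$, and the edge- and node-assignments from the preamble already form a non-overlapping decomposition of a lower bound on $y(\delta(S))$. Any contributions from non-$M_F$ gadgets, from lonely nodes not incident to $M_F$, and from $y(\delta(S)\setminus F)$ beyond what we have used are nonnegative and can be discarded.
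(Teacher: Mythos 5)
Your proof is correct, and while the engine in the hard case is the same as the paper's (the edge/lonely-node assignment from the preamble and a bound of $3\alpha$ per $F$-edge), your case decomposition is genuinely different and arguably cleaner. The paper treats all of $|F|\ge 3$ uniformly: it collects the type-I/II/III edges of $\delta(R)$ whose gadgets meet $F$ into sets $E_1,E_2,E_3$, upper-bounds $y(F)$ by $(1-\alpha)\left(x(E_1)+x(E_2)+x(E_3)\right)$, introduces a correction term $\beta|E_3|$ for type-III gadgets contributing two edges to $F$, and needs the separate observation that a lonely node is incident on at most one edge of $E_1\cup E_2\cup E_3$. You instead peel off the cases where $F$ contains top/bottom edges using only the elementary bounds $1-y(e')\ge 1-\alpha$ and $1-y(e')\ge\alpha$, which give $\tfrac{16}{9}$ and $\tfrac{10}{9}$ directly; the assignment machinery is then needed only when $F$ consists entirely of middle edges, and there the fact that $F$ projects to an honest matching $M_F$ in $G$ makes the disjointness of the charges immediate (distinct $M_F$-edges have disjoint endpoints, hence distinct lonely nodes) and eliminates the $\beta$ bookkeeping. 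Your three per-edge charges, each minimized at $x(i,j)=1$ with value $3\alpha$, are exactly the per-edge quantities implicit in the paper's final chain of inequalities, so both arguments bottom out at $z(\delta(S))\ge 3\alpha|F|\ge 9\alpha=1$. Two minor remarks: Claim~\ref{claim1} is not actually needed in your Cases A and B (the elementary bounds hold for any gadget type), and in Case C the restriction to types I, II, III is automatic since a middle edge crosses the cut only in those types; invoking the claim does no harm, but your proof would stand without it.
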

\begin{proof}
By Claim~\ref{claim1}, we may assume that all edges in $F$ are contained in a gadget of type I, II or III, and hence, that the corresponding edges in $e\in G$ are in $\delta(R)$.
Let $E_1, E_2, E_3$ be the edges in $\delta(R)$ of type I, II and III, respectively, for which the gadget contains one or more edges in $F$.

Note that a lonely node $i$ can be incident on at most one edge in $E_1\cup  E_2\cup E_3$: Only the edges $(i,j)\in E_1\cup E_2$ can be incident on a lonely node $i$, and in the first case, $(i_m,j_m)$ must be in $F$, and in the second case, either $(i_m,j_o)$ or $(i_m,j_m)$ is in $F$, since these are the only edges that cross the cut for these types. Now, since $F$ is a matching, it can have at most one edge incident on $i_m$ and hence $i$ can be incident on at most one edge in $E_1\cup  E_2\cup E_3$.

We therefore have that
\[y(\delta(S)) \ge (1-3\alpha) x(E_1) + 4\alpha |E_1| + (1-\alpha) x(E_2) + 2\alpha |E_2| + (1+\alpha) x(E_3).\]

On the other hand, since $F$ is a matching, only the gadgets for edges of type III can contain two edges in $F$. Hence, $|F|=|E_1|+|E_2|+(1+\beta)|E_3|$, where $\beta$ is the fraction of edges in $E_3$ for which two edges in the corresponding gadget are contained in $F$.

Also, $y(F) \le (1-\alpha)\left(x(E_1)+x(E_2)+x(E_3)\right)$, since $y((i_*,j_*))\le (1-\alpha) x(i,j)$, and, if two edges in the gadget for $e\in E_3$ are contained in $F$, then these edges both have $y$-value $\alpha x(e)$, and since $\alpha \le \frac 13$, $2\alpha x(e) \le (1-\alpha) x(e)$.

Hence, we get that
\begin{eqnarray*}
z(\delta(S))&=& y(\delta(S))+|F|-2y(F)\\
&\ge&(1+4\alpha) |E_1| + (-1-\alpha) x(E_1) + (1+2\alpha) |E_2| + (-1+\alpha) x(E_2) \\
&&+ |E_3| + (-1+3\alpha) x(E_3) + \beta |E_3|\\
&\ge & 3\alpha (|E_1|+|E_2|+|E_3|) +\beta |E_3|\ge 3\alpha |F|,
\end{eqnarray*}
where the penultimate inequality follows from the fact that $x(E_k)\le |E_k|$ and $\alpha \leq \frac 13$, and the last inequality from the fact that $\alpha \le \frac 13$.
Hence, if we choose $\alpha = \frac 19$, then $z(\delta(S))\ge 1$.
\end{proof}

\end{proofof}

\section{Conjectures and Conclusions}
\label{sec:conc}

\iftoggle{abs}
{We conclude our paper with a conjecture.}
{
\begin{quote}
{\em
I conjecture that there is no [polynomial-time] algorithm for the traveling salesman problem.  My reasons are the same as for any mathematical conjecture: (1) It is a legitimate mathematical possibility, and (2) I do not know.
\begin{flushright}
--- Edmonds~\cite{Edmonds67}
\end{flushright}
}
\end{quote}

We conclude our paper with a conjecture.  We do so in the spirit of Jack Edmonds, quoted above; we do not know whether the conjecture is true or not, but we think that even a proof that this conjecture is false would be interesting.}
Our conjecture says that the integrality gap (or worst-case ratio) of the subtour LP is obtained for specific kinds of vertices of the subtour polytope; namely, ones in which the
subtour LP solution has no subtour constraint as part of the dual basis,
or, restated a different way, for costs $c$ such that an optimal subtour LP solution for $c$ is the same as an optimal fractional 2-matching for $c$.  Let us call such costs $c$ {\em fractional 2-matching costs} for the subtour LP.  Note that for such solutions of the subtour LP, the fractional 2-matching will have no cut edge.
\begin{conjecture} \label{conj:tsp}
The integrality gap for the subtour LP is attained for a fractional 2-matching cost for the subtour LP.
\end{conjecture}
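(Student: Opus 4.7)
The plan is to start from any cost function $c$ satisfying our standing assumptions and produce a fractional 2-matching cost $c'$ with at least as large an integrality gap, by absorbing the dual subtour multipliers into the edge costs. Let $x^*$ be optimal for $(\SUBT)$ under $c$, and let $(y^*, z^*, w^*)$ be an optimal dual, with $z^* \ge 0$ the multipliers on the subtour constraints~(\ref{subtourcons}). Define $c'(e) = c(e) - \sum_{S : e \in \delta(S)} z^*_S$. Complementary slackness gives $x^*(\delta(S)) = 2$ whenever $z^*_S > 0$, so $c'(x^*) = c(x^*) - 2\sum_S z^*_S$; and since every integer tour $T$ uses at least two edges of $\delta(S)$ for each valid $S$, we get $c'(T) \le c(T) - 2\sum_S z^*_S$. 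Therefore $\OPT(c')/\mathrm{LP}(c') \ge \OPT(c)/\mathrm{LP}(c)$. Moreover, $(y^*, w^*)$ is feasible for the F2M dual under $c'$ with value equal to $c'(x^*)$, so $x^*$ is optimal as a fractional 2-matching under $c'$; that is, $c'$ is a fractional 2-matching cost.

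The main obstacle is preserving the hypotheses---nonnegativity and the triangle inequality---that our instances are assumed to satisfy. Edges lying in many tight subtour cuts can be driven negative by the reduction, and triangle inequality can break. The natural remedy is to pass to the metric closure $c''(i,j) = \min_{P} \sum_{e \in P} c'(e)$: shortcutting preserves integer tour cost, so $\OPT(c'') \le \OPT(c')$. The difficulty is on the LP side, since the closure may also strictly decrease $\mathrm{LP}(c')$, and one must simultaneously argue that the LP value drops by no more than $\OPT$ does, and that an optimal LP solution under $c''$ remains an F2M. This is precisely the analogue---for subtour constraints rather than comb inequalities---of Goemans' open question quoted in the introduction, and asserting that the metric closure does not disturb the subtour LP value is essentially asserting Conjecture~\ref{conj:tsp} itself; a naive application of metric closure is therefore circular.

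Given this obstruction, I would pursue one of two refinements. The first is an incremental perturbation: eliminate one tight subtour constraint at a time by subtracting a small $\epsilon$ from $c$ on the corresponding cut $\delta(S)$, choosing $\epsilon$ small enough that triangle inequality is preserved for all triples interacting with $S$, and iterating (passing to a limit via compactness if needed) until no subtour constraint carries positive dual weight. The second is an extremal-structure argument: show that any worst-case cost function can be replaced by one whose optimal subtour LP support already has the combinatorial structure of an F2M---for instance half-integral with connected fractional components and no cut edge---invoking the polyhedral machinery of Section~\ref{sec:2m-poly} and the companion work of Qian et al.~\cite{QianSWvZ11}. Both routes must still resolve the tension between the triangle inequality and the dual subtour multipliers, which I regard as the genuine mathematical obstacle, and the reason the statement is phrased as a conjecture rather than a theorem.
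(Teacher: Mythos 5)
The statement you are addressing is posed in the paper as Conjecture~\ref{conj:tsp}, not as a theorem: the authors explicitly say they do not know whether it is true, and they offer no proof. So there is nothing in the paper to compare your argument against, and a complete proof would be a new result. Your proposal, as you concede in your final sentence, does not supply one --- the two ``refinements'' (incremental $\epsilon$-perturbation; an extremal-structure argument) are research programs rather than arguments, and you correctly identify that the metric-closure step is circular. That self-awareness is to your credit, but it means the proposal has a gap essentially by construction.

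There is also a concrete error earlier than the point where you stop. From $c'(T)\le c(T)-2\sum_S z^*_S$ for every tour $T$ you conclude $\OPT(c')/\mathrm{LP}(c')\ge \OPT(c)/\mathrm{LP}(c)$, but the inequality you derived runs the wrong way. Minimizing over $T$ gives only the \emph{upper} bound $\OPT(c')\le \OPT(c)-2\sum_S z^*_S$, which (together with $\mathrm{LP}(c')=\mathrm{LP}(c)-2\sum_S z^*_S$) bounds the new ratio from \emph{above}. To push the ratio up you would need the matching \emph{lower} bound $\OPT(c')\ge \OPT(c)-2\sum_S z^*_S$, and that can fail: the tour that is optimal under $c'$ may cross a tight cut $S$ with $z^*_S>0$ more than twice, in which case its cost under the reduction drops by more than $2z^*_S$ on that cut and the integrality gap can shrink. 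This is not a technicality --- it is exactly the ``do the degree constraints matter'' phenomenon behind the Goemans quotation in the paper's footnote, and it is arguably the real content of the conjecture. So even before the nonnegativity and triangle-inequality issues you flag, the dual-absorption step does not establish that the worst case migrates to a fractional 2-matching cost.
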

We could make a similar conjecture for the ratio of the cost of the optimal 2-matching to the subtour LP, but by Theorem~\ref{thm:f2m-109} and Corollary~\ref{cor:boydcarr}, we already know that the conjecture is true.  However, its truth does not shed any light on the conjecture above.

In a companion paper, Qian et al.~\cite{QianSWvZ11} show that if an analogous conjecture for edge costs $c(i,j) \in \{1,2\}$ is true, then the integrality gap for 1,2-TSP is at most $\frac 76$.  They conjecture that the integrality gap for the 1,2-TSP is at most $\frac{10}9$; it is known that it can be no smaller than $\frac{10}9$.  It would be nice to show that if the analogous conjecture is true then the integrality gap for 1,2-TSP is at most $\frac{10}9$.

Interestingly, we appear to know almost nothing about the consequences of Conjecture~\ref{conj:tsp}.  Even for this very restricted set of cost functions, we do not know a better upper bound on the integrality gap of the subtour LP other than the bound of $\frac{3}{2}$.  Note that the lower bound of $\frac{4}{3}$ is attained for a fractional 2-matching cost.  It would be very interesting to prove that for such costs the integrality gap is indeed $\frac{4}{3}$.  Boyd and Carr~\cite{BoydC11} have shown this for some fractional 2-matching costs in which all the cycles of the fractional 2-matching have size 3; this result also follows from the technique of Theorem~\ref{thm:f2m-43-cutedge}, since the resulting graphical 2-matching is Eulerian if all cycles have size 3 and the fractional 2-matching has a single component (the graphical 2-matching may not be connected if there are cycles of size 5).

\subsection*{Acknowledgements}
We thank Sylvia Boyd for useful and encouraging discussions; we also thank her for giving us pointers on her various results.  Gyula Pap made some useful suggestions regarding the polyhedral formulation of graphical 2-matchings.

\bibliographystyle{abbrv}
\bibliography{subtour}

\appendix

\iftoggle{abs}{
\section{Figures} \label{sec:figs}

\begin{figure}[h]
\begin{center}
\subfloat[Type I.]{\includegraphics[width=.22\textwidth]{figures/edgegadget1.pdf}}\qquad
\subfloat[Type II.]{\includegraphics[width=.22\textwidth]{figures/edgegadget2.pdf}}\qquad
\subfloat[Type III.]{\includegraphics[width=.22\textwidth]{figures/edgegadget3.pdf}}\\
\subfloat[Type IV.]{\includegraphics[width=.22\textwidth]{figures/edgegadget4.pdf}}\qquad
\subfloat[Type V.]{\includegraphics[width=.22\textwidth]{figures/edgegadget5.pdf}}\qquad
\end{center}
\caption{Illustrations of the five types of cuts of the edges in the reduction. The $y$-value on the top and bottom edge is $\alpha x(i,j)$ and the $y$-value on the middle edge is $(1-\alpha)x(i,j)$.}
\label{fig:types}
\end{figure}
}{}

\iftoggle{abs}{}{
\section{Polyhedral description of 2MO} \label{sec:2foproof}
We repeat Theorem~\ref{2fo} for sake of completeness.

\begin{theorem}
Let $G = ( V_\man \cup V_\opt, E )$ be a 2MO instance. The convex hull of integer 2MO solutions is given by the following polytope:
\lps
& & & & \sum_{ e \in \delta( i ) } x(e) = 2, & \forall i \in V_\man \numb{2fodegreeconsman2}\\
& & & & \sum_{ e \in \delta( i ) } x(e) \leq 2, & \forall i \in V_\opt \numb{2fodegreeconsopt2} \\
& & & & \sum_{ e \in \delta( S ) \setminus F } x(e) + \sum_{ e\in F } ( 1 - x(e) ) \geq 1, &\forall S \subseteq V, \,F \subseteq \delta( S ), \,F
\text{ matching}, \,|F| \text{ odd,} \numb{2foFcons2} \\
& & & & 0 \leq x(e) \leq 1, &\forall e\in E. \numb{2foboundscons2} \elps
\end{theorem}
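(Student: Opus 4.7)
The plan is to follow the structure of Schrijver's proof of the 2-matching polytope (Theorem~30.8), verifying first that every integer 2MO solution satisfies the stated constraints and then showing that every extreme point of the polytope is integer via a reduction to the standard 2-matching polytope on an enlarged graph.

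For the easy direction, the degree constraints (\ref{2fodegreeconsman2})--(\ref{2fodegreeconsopt2}) and the bounds (\ref{2foboundscons2}) are immediate for any integer 2MO solution $x$. For the odd-matching constraint (\ref{2foFcons2}), suppose its left-hand side equals zero; then $x(e) = 0$ for every $e \in \delta(S)\setminus F$ and $x(e) = 1$ for every $e \in F$, which forces $x(\delta(S)) = |F|$. But $x(\delta(S)) = \sum_{v \in S} x(\delta(v)) - 2\,x(E[S])$ is even, since every $x(\delta(v)) \in \{0, 2\}$, contradicting that $|F|$ is odd.

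For the hard direction, I would construct an auxiliary graph $G^* = (V^*, E^*)$ from $G$ by attaching, to each optional vertex $v \in V_\opt$, a $K_4$-gadget: three new mandatory nodes $v_1, v_2, v_3$ together with all six edges making $\{v, v_1, v_2, v_3\}$ a complete graph (gadget edges have cost zero). Summing the integer degree equalities at $v_1, v_2, v_3$ in $G^*$ quickly yields $x(\delta_G(v)) \in \{0, 2\}$ in every integer 2-matching on $G^*$, so integer 2-matchings on $G^*$ restrict to integer 2MO solutions on $G$. For the polyhedral step, any feasible point $x$ of the 2MO polytope lifts to $\tilde x \in \mathbb{R}^{E^*}$ by setting $\tilde x(e) = x(e)$ on $E$ and, on each gadget, $\tilde x(v, v_i) = (2 - d_v)/3$ and $\tilde x(v_i, v_j) = (d_v + 4)/6$, where $d_v := x(\delta_G(v)) \in [0, 2]$. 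A short calculation shows $\tilde x \in [0,1]^{E^*}$ and $\tilde x$ satisfies every degree equality of the 2-matching polytope on $G^*$.

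The main obstacle is verifying that $\tilde x$ satisfies every blossom inequality of the 2-matching polytope on $G^*$. When the cut $S^* \subseteq V^*$ contains either all or none of each gadget $\{v, v_1, v_2, v_3\}$, the blossom inequality for any odd matching $F^* \subseteq \delta_{G^*}(S^*)$ collapses to inequality (\ref{2foFcons2}) on $G$ for $(S^* \cap V, F^* \cap E)$, which holds by hypothesis. When $S^*$ splits some gadgets, I would apply an uncrossing argument: using the symmetry of the $K_4$-lift and the odd parity of $|F^*|$, one can move each split gadget entirely inside or outside $S^*$ (replacing $F^*$ inside the gadget by a matching of appropriate parity if necessary) without increasing the left-hand side of the blossom inequality, reducing to the un-split case. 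Granted this uncrossing, Edmonds' theorem applied to the 2-matching polytope on $G^*$ gives that $\tilde x$ is a convex combination of integer 2-matchings on $G^*$, which restrict to integer 2MO solutions on $G$; hence every extreme point of the 2MO polytope is integer, proving the theorem.
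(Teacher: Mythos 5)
Your overall plan is structurally sound, and your easy direction is correct: the parity argument via $x(\delta(S)) = \sum_{v \in S} x(\delta(v)) - 2x(E[S])$ is equivalent to the paper's observation that every node has even degree while $|F|$ is odd. Your hard direction, however, takes a genuinely different route from the paper: you reduce 2MO to the perfect simple 2-matching polytope on a graph obtained by hanging a $K_4$-gadget off each optional node, whereas the paper reduces to the \emph{perfect matching} polytope via the standard node- and edge-splitting construction (Schrijver's Theorem 30.7, augmented with an extra edge $(i',i'')$ for each optional node $i$), writes down an explicit fractional perfect matching, and verifies Edmonds' odd-set inequalities directly in four cases. Your gadget and lift are consistent as far as they go: summing the degree equalities at $v_1,v_2,v_3$ does force $x(\delta_G(v)) \in \{0,2\}$ in any integer solution, and your lifted values satisfy the degree equalities and the bounds on $G^*$.

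The gap is exactly where you flag it: the uncrossing for cuts $S^*$ that split a gadget is asserted, not proved, and it is the technical heart of the argument. The difficulty is not the symmetry of the lift but the bookkeeping of $F^*$: when you move $v_1,v_2,v_3$ across the cut, the gadget edges of $F^*$ leave $\delta(S^*)$, which can flip the parity of $|F^*|$, and restoring odd parity requires inserting an edge into $F^*$ that still lies in the new cut, is disjoint from the rest of $F^*$, and does not increase the left-hand side --- none of which is automatic. For instance, with $S^* = S \cup \{v_1\}$, $v \notin S$, and $F^* = F' \cup \{(v_1,v_2)\}$ where $F' \subseteq \delta(S)$ has even cardinality, dropping $(v_1,v_2)$ leaves an \emph{even} set $F'$, so there is no valid blossom pair $(S,F')$ to fall back on; one must instead prove a direct estimate such as $x(\delta(S)) + |F'| - 2x(F') \ge 0$ for even matchings $F'$ and combine it with the gadget contribution. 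The paper faces the analogous difficulty and resolves it by first proving, by induction on $|F|$, that constraint~(\ref{2foFcons2}) remains valid for \emph{arbitrary} odd $F \subseteq \delta(S)$, not just matchings; you will almost certainly need a strengthening of this kind, or else an explicit case analysis over the ways a gadget can be split and the ways $F^*$ can meet its six edges, before the appeal to the 2-matching polytope theorem goes through.
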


\begin{proof}
The proof that we present here is similar to the proof of the polyhedral description of the 2-matching polytope (Theorem 30.8) in Schrijver~\cite{Schrijver-book}. We will first show that any 2MO solution is contained in the polytope, and next show that the extreme points of the polytope coincide with the 2MO solutions.

Constraints~(\ref{2fodegreeconsman2}), (\ref{2fodegreeconsopt2}) and (\ref{2foboundscons2}) obviously hold for a 2MO solution. To show that constraint~(\ref{2foFcons2}) is satisfied, we consider two cases:
(case 1) There is a $\bar e \in F$ with $x( \bar e ) = 0$. This makes the left hand side of constraint~(\ref{2foFcons2}) at least 1, since $x(e) \geq 0$ for all $e$.
(case 2) $x( e ) = 1$ for all $e\in F$. Since $|F|$ is odd, and each node is incident to an even number of edges in an 2MO solution, it follows that there has to be an edge in the solution in $\delta( S )$ that is not in $F$. So the constraint also holds in this case.

The polytope thus contains all 2MO solutions. We will now show that its extreme points coincide with 2MO solutions, by reducing 2MO instances to matching instances, for which perfect matchings correspond to 2MO solutions. We will show that any feasible point in the 2MO polytope corresponds to a feasible point in the perfect matching polytope. Because any point in the perfect matching polytope can be written as a convex combination of perfect matchings this implies that any point in the 2MO polytope can be written as a convex combination of 2MO solutions, and therefore all extreme points of the 2MO polytope correspond to 2MO solutions.

Before we consider the reduction to perfect matchings, we will first show that adding constraint~(\ref{2foFcons2}) for all $F \subseteq E$ of odd cardinality does not change the 2MO polytope. These additional constraints will be convenient when showing that a feasible point in the 2MO polytope is in the perfect matching polytope.

We prove this by induction on $|F|$. Consider $\bar S$ and $\bar F \subseteq \delta( \bar S )$ so that $F$ is not a matching, i.e. $| \bar F \cap \delta( i ) | \geq 2$ for some $i\in V$. We consider three cases.
\begin{itemize}
\item (Case 1) $| \bar F \cap \delta( i ) | \geq 3$.
Then
\begin{align*}
	\sum_{ e\in \delta( \bar S ) \setminus \bar F } x(e) + \sum_{ e\in \bar F } ( 1 - x(e) ) &\geq \sum_{ e\in \bar F } ( 1 - x(e) ) \geq \sum_{ e\in \bar F \cap \delta( i ) } ( 1 - x(e) ) \geq 3 - \sum_{ e\in \bar F \cap \delta( i ) } x(e) \\ &\geq 3 - \sum_{ e\in \delta( i ) } x(e) \geq 3 - 2 \geq 1.
\end{align*}
\item (Case 2) $| \bar F \cap \delta( i ) | = 2$ and $i \in \bar S$.  Let $F' = \bar F \setminus \delta( i )$  and let $S' = \bar S \setminus \{ i \}$. Then
\begin{align*}
\sum_{  e \in \delta( \bar S ) \setminus \bar F } x(e) & + \sum_{ e\in \bar F  } ( 1 - x(e) )\\
     &\geq \sum_{ e\in\delta( S' ) \setminus F' } x(e) - \sum_{e \in \delta( i ) } x(e) + \sum_{e \in \delta( i ) \cap \bar F } x(e) + \sum_{e \in F'} ( 1- x(e) ) + \sum_{ e\in \delta( i ) \cap \bar F } ( 1- x(e) ) \\
	&= \sum_{ e\in\delta( S' ) \setminus F' } x(e) + \sum_{ e\in F' }( 1- x(e) ) - \sum_{ e\in \delta( i ) } x(e) + 2.
\end{align*}
By induction and the degree bound for $i$, this quantity is at least 1.
\item (Case 3) $| \bar F \cap \delta( i ) | = 2$ and $i \not\in \bar S$. Let $F' = \bar F \setminus \delta( i )$ as in the previous case, but now let $S' = \bar S \cup \{ i \}$. Then the exact same string of inequalities as in the previous case holds.
\end{itemize}

We now use the usual reduction from 2-matchings to matchings (see Theorem 30.7 in Schrijver, the notation of which we will also follow): for each node $i$ in the 2MO, there will be two nodes in the matching instance: $i'$ and $i''$. For each edge $e = (i, j)$ in the 2MO instance, there will be two nodes and five edges in the matching instance: nodes $p_{ e, i }$ and $p_{ e, j }$, and edges $(i', p_{ e, i })$, $(i'', p_{ e, i })$, $( p_{ e, i }, p_{ e, j })$, $(j', p_{ e, j })$, and $(j'', p_{ e, j })$. The only difference between the reduction from 2-matchings to matchings, and the reduction from 2MO to matchings is that for optional nodes we also add an edge between nodes $i'$ and $i''$. An illustration of the reduction is given in Figure~\ref{fig:2foreduction}, where the part of the matching instance is given which corresponds to an edge between a mandatory node $i$, and an optional node $j$.

\begin{figure}
\begin{center}
\includegraphics[width=.42\textwidth]{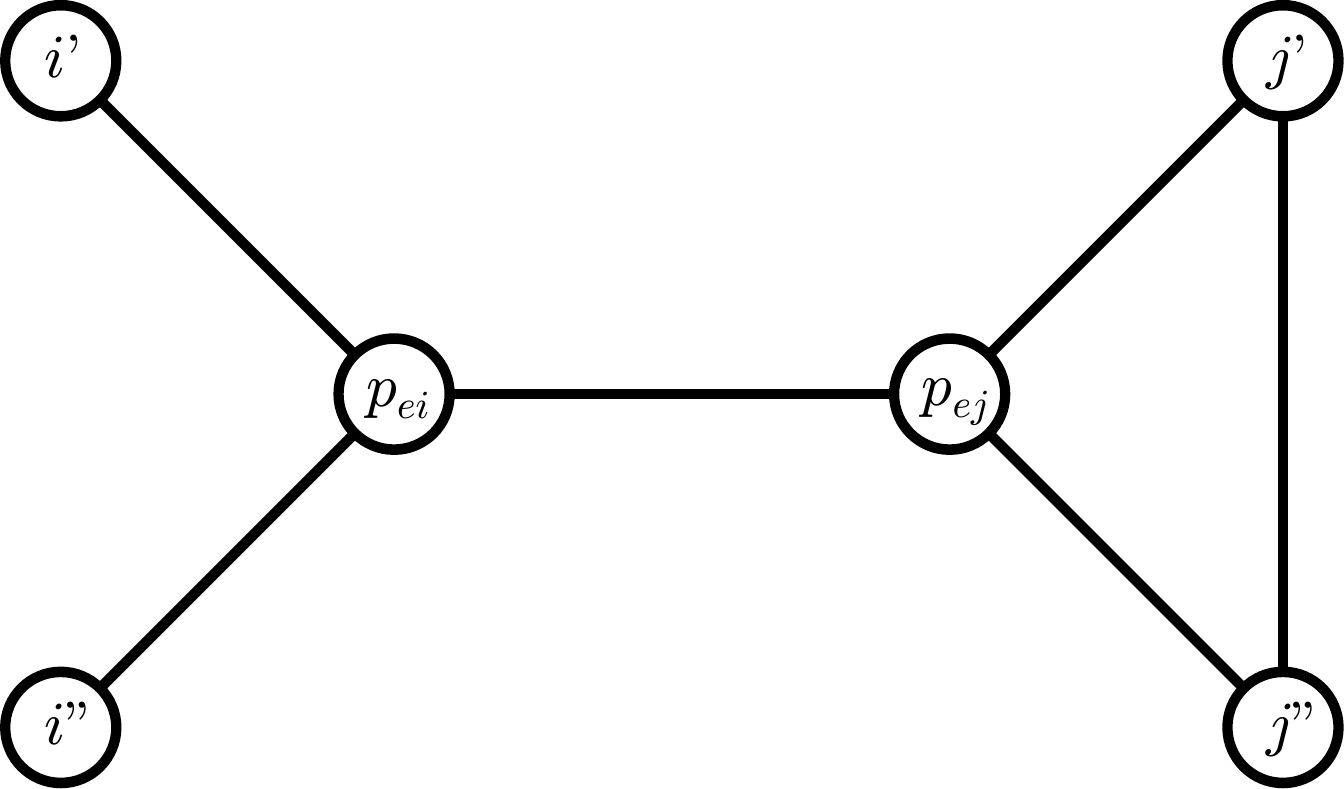}
\end{center}
\caption{Illustration of the reduction from 2MO to matchings. The part of the matching instance is drawn which corresponds to an edge between a mandatory node $i$, and an optional node $j$.}
\label{fig:2foreduction}
\end{figure}

Given a (fractional) solution $x$ to a 2MO instance, we define a solution $y$ to the corresponding matching instance as follows:
\begin{align*}
	y(i', p_{ e, i }) &= y( i'', p_{ e, i }) = \frac 12 x(e) \text{ and } \\
	y( p_{ e, i }, p_{ e, j }) &= 1 - x(e)
\end{align*}
for all $e = (i, j) \in E$, and
\[
	y(i', i'') = 1 - \frac 12 \sum_{ e\in\delta( i ) } x(e) \text{ for all } i \in V_\opt.
\]

We will now show that this solution is indeed in the perfect matching polytope given by the constraints~(\ref{matchingdegreecons}), (\ref{matchingScons}) and (\ref{matchingboundscons}) of the linear program $(M)$ in Section~\ref{sec:prelims} (where the variables are here called $y$ instead of $x$). For nodes $p_{e,i}$, the degree bound constraints~(\ref{matchingdegreecons}) follow directly from the definition of $y$ (there are three edges incident on $p_{e,i}$ with $y$-values $\frac 12 x(e)$, $\frac 12 x(e)$ and $1-x(e)$, which sum to $1$). For the other nodes, constraint~(\ref{matchingdegreecons}) follows directly from the degree bound constraints~(\ref{2fodegreeconsman2}) or~(\ref{2fodegreeconsopt2}) in the 2MO instance and the definition of $y$. Constraints~(\ref{matchingboundscons}) follow directly from constraints~(\ref{2foboundscons2}).

We will now prove that constraints~(\ref{matchingScons}) also hold for all subsets of nodes of odd cardinality in our reduction. Let $S'$ be such a subset. We consider four cases.
\begin{itemize}
\item (Case 1) $| \{ i', i'' \} \cap S' | = 1$ for some $i \in V$. Note that we have edges $(i', p_{ e, i })$ and $(i'', p_{ e, i })$ in the reduction both of which have $y$-value $\frac 12 x(e)$, and of which exactly one will be in $\delta( S' )$. Furthermore, $(i', i'')$ is in $\delta(S')$ if $i$ is in $V_\opt$. Therefore $\sum_{ e'\in \delta( S' ) } y(e') \geq \sum_{ e \in \delta( i ) } \frac 12 x(e) = 1$ if $i \in V_\man$ by the degree bound~(\ref{2fodegreeconsman2}). Similarly $\sum_{ e'\in \delta( S' ) } y(e') \geq \sum_{ e \in \delta( i ) } \frac 12 x(e)  + ( 1 - \frac 12 \sum_{ e \in \delta( i ) } x(e) ) = 1$ if $i \in V_\opt$ by the degree bound~(\ref{2fodegreeconsopt2}).
\item (Case 2) For some $e = (i, j) \in E$, $p_{e, i} \in S', p_{e, j} \not\in S'$ and $\{ i', i'' \} \cap S' = \emptyset$. Let $p = p_{e,i}$. Then $\sum_{ e' \in \delta( S' ) } y(e') \geq y( p, i' ) + y(p, i'') + y(p, p_{e,j}) = \frac 12 x(e) + \frac 12 x(e) + 1 - x(e) = 1$.
\item (Case 3) For some $e = (i, j)\in E$, $p_{e, i} \in S', p_{e, j} \not\in S'$ and $\{ j', j'' \} \subseteq S'$. Let $p = p_{e,j}$. Then $\sum_{ e' \in \delta( S' ) } y(e') \geq y(p, j') + y(p, j'') + y(p, p_{e,i}) = \frac 12 x(e) + \frac 12 x(e) + 1 - x(e) = 1$.
\item (Case 4) We may now assume that $S'$ is such that $| \{ i', i'' \} \cap S' |$ is even for all $i \in V$, and that $\{ i', i'' \} \in S'$ and $\{ j', j'' \} \cap S' = \emptyset$ if $p_{e, i} \in S'$ and $p_{e, j} \not\in S'$, because otherwise we are in one of the previous cases. Define $\bar S = \{ i \in V: i' \in S' \mbox{ and } i'' \in S' \}$ and $\bar F = \{ e = \{ i, j \} \in E: p_{ e, i } \in S' \mbox{ and } p_{ e, j } \not\in S'  \}$. Note that the previous argument implies that $\bar F \subseteq \delta( \bar S )$.

Consider $e = ( i, j ) \in \delta( \bar S )$ in the 2MO instance, and assume without loss of generality that $i \in \bar S$. By definition of $\bar S$, this means $\{ j', j'' \} \cap S' = \emptyset$.  We consider $e \in \bar F$ and $e \not\in\bar F$ separately.
First of all, assume $e \in \bar F$. Since we are not in the previous cases this means that $p_{ e, i } \in S'$ and $p_{ e, j } \not\in S'$. So for each such $e$ in the 2MO instance, we have $(p_{ e, i }, p_{ e, j }) \in \delta( S' )$ in the matching instance, with an $y$-value of $1-x(e)$.
Second, assume $e \not\in \bar F$. By definition of $\bar F$, we know that either $p_{e,i}$ and $p_{e,j}$ are both in $S'$, or both not in $S'$. So for each such $e$ in the 2MO instance, we have either $\{ (i', p_{ e,i }), (i'', p_{ e,i }) \} \subseteq \delta( S' )$ or $\{ ( j', p_{ e,j } ), ( j'', p_{ e,j } ) \} \subseteq \delta( S' )$ in the matching instance, each of which carry a total $y$-value of $x(e)$.

We thus get $\sum_{ e'\in\delta( S' ) } y(e') \geq \sum_{ e \in \delta(\bar S) \setminus \bar F } x(e) + \sum_{ e \in \bar F } ( 1 - x(e) )$.  We then note that $|\bar F|$ is equal to the number of nodes of the type $p_{e,i}$ in $S'$, which implies that the parity of $|\bar F|$ and $|S'|$ are always the same, as the other nodes in $S'$ appear in pairs.  Thus since $|S'|$ is odd, $|\bar F|$ is odd, and we have
 $\sum_{ e'\in\delta( S' ) } y(e') \geq \sum_{ e \in \delta(\bar S) \setminus \bar F } x(e) + \sum_{ e \in \bar F } ( 1 - x(e) ) \geq 1$ by the feasibility of $x$ for constraints (\ref{2foFcons2}).
\end{itemize}

We conclude the proof by noting that a perfect matching in the constructed instance corresponds to the 2MO solution consisting of all edges $e = ( i, j )$ for which $(p_{ e, i }, p_{ e, j })$ is {\em not} in the perfect matching solution.
\end{proof}}

\end{document}